\newenvironment{svmultproof}{\begin{proof}}{\qed\end{proof}}
\begin{document}
\title{From Thermodynamic Sufficiency to Information Causality}
\titlerunning{From Sufficiency to Causality}
\author{Peter Harremo{\" e}s}
\authorrunning{P. Harremo{\" e}s}
\institute{P. Harremo{\" e}s \at Copenhagen Business College\\
N{\o}rre Voldgade 34\\
K{\o}benhavn K\\
Denmark\\
Tel.: +45-39564171\\
\email{harremoes@ieee.org}\\
}
\date{Received: date / Accepted: date}
\maketitle
\begin{abstract}
The principle called information causality has been used to deduce
Tsirelson's bound. In this paper we derive information causality from
monotonicity of divergence and relate it to more basic principles
related to measurements on thermodynamic systems. This principle is
more fundamental in the sense that it can be formulated for both unipartite
systems and multipartite systems while information causality is only
defined for multipartite systems. Thermodynamic sufficiency is a strong
condition that put severe restrictions to shape of the state space
to an extend that we conjecture that under very weak regularity conditions
it can be used to deduce the complex Hilbert space formalism of quantum
theory. Since the notion of sufficiency is relevant for all convex
optimization problems there are many examples where it does not apply.
\keywords{Bregman divergence \and multipartite system \and information causality
\and thermodynamic sufficiency} 
\PACS{03.65.Ud \and 03.67.-a \and 03.65.Aa} 
\subclass{81P16 \and 94A17}
\end{abstract}

\section{Introduction}

Entanglement is a resource that may allow agents to solve certain game
problems in a more efficient way than what is possible without entanglement.
Such tasks could be solved even more efficiently if the agents had
access to a fictive resource called PR-boxes. Such boxes cannot be
used for signaling, but they can create correlations that are stronger
than the correlations that can be created using entanglement. To be
more precise, all quantum mechanical correlations satisfy Tsirelson's
bound while PR-boxes can violate Tsirelson's bound. 

The goal is to explain Tsirelson's bound and other bounds on correlations
from more basic physical principles. One such principle is called
information causality, and it may be formulated as ``one bit of communication
cannot create more than one bit of correlation''. In \cite{Pawlowski2009}
this principle was introduced and it was proved that it can be used
to derive Tsirelson's bound. In \cite{Pawlowski2009} information
causality was formulated and derived from the existence of the function
conditional mutual information that is assumed to satisfy some basic
properties. In \cite{Short2010} two ways of defining entropy were
specified, and they were used to formulate the principle of information
causality. 

In this paper use properties of Bregman divergences rather than entropy
or mutual information as the basic principle. These divergences have
several advantages compared with entropy and mutual information.

To each convex optimization problem one can associate a Bregman divergence.
If the optimization problem is energy extraction in thermodynamics
the Bregman divergence is proportional to quantum relative entropy
that has some very desirable properties. These properties may be violated
if one looks at different optimization problems. Therefore one may
ask what is so special about energy extraction in thermodynamics,
but this important problem will not be covered in the present paper.
One advantage of studying divergence (and entropy) rather than conditional
mutual information is that divergence and its properties can be studied
for unipartite systems while conditional mutual information only makes
sense for multipartite systems. This is important because we do not
have a canonical way of forming product spaces in generalized probabilistic
theories. Bregman divergences with nice properties can be defined
on Jordan algebras and the existence of a nice Bregman divergence
rule out most other convex bodies as potential state spaces. Finally,
both entropy and conditional mutual information may be considered
as derived concepts based on divergence. This aspect will be the focus
of the present paper.

The paper is organized as follows. In Section \ref{sec:State-spaces}
we specify concepts like state space and measurement and we fix notation.
Jordan algebras and their most important properties are described
in Section \ref{sec:Jordan-algebras}. In Section \ref{sec:Entropy-in-Jordan-1}
it is proved that several different ways of defining entropy coincide
for Jordan algebras. Bregman divergences and their relation to optimization
are described in Section \ref{sec:Bregman-divergences}. Several conditions
related to the notion of sufficiency are defined. For Jordan algebras
these conditions are equivalent and the Bregman divergence is generated
by the entropy function. In Section \ref{sec:Information-causality}
we define conditional mutual information based on a Bregman divergence
and we demonstrate that the conditional mutual information has the
properties that are needed for information causality to be satisfied. 
We conclude with Section \ref{sec:conclusion} we summarize our results and state some open problems.

\section{State spaces\label{sec:State-spaces}}

Let $\mathcal{P}$ denote a set of \emph{preparations} of a physical
experiment. A \emph{mixed preparation} is a formal mixture $\sum s_{i}\cdot p_{i}$
where $p_{i}$ are preparations and $\left(s_{i}\right)_{i}$ is a
probability vector. The mixture $\sum s_{i}\cdot p_{i}$ is identified
with the preparation where $p_{i}$ is chosen with probability $s_{i}$.
A measurement $m$ maps each preparation in $\mathcal{P}$ into a
probability measure on the set of possible outcomes of the experiment.
We assume that $m$$\left(\sum s_{i}\cdot p_{i}\right)=\sum s_{i}\cdot m\left(p_{i}\right).$
Let $\mathcal{M}$ denote the set of measurements that can be performed
by an observer (or a group of observers). If $m\left(p_{1}\right)=m\left(p_{2}\right)$
for all measurements $m\in\mathcal{M}$ then we say that $p_{1}$
and $p_{2}$ represent the same \emph{state}. The set of states is
called the state space, and with this Bayesian definition of a state
the state space will depend on the set of feasible measurements. In
particular, the state spaces of two different observers may be different
because they may have different sets of measurements. A group of observers
may have a different state space than any of the individual observers
because the set of joint measurements may be larger than the set of
measurements that can be performed by any of the individual observers.

For simplicity we will assume that the state spaces are \emph{convex
bodies} $\Omega$, i.e. convex compact sets spanned by finitely many
elements. The extreme point are called \emph{pure states}. Any convex
body can be embedded in the pointed cone $\Omega_{+}$ consisting of
formal products $t\cdot\sigma$ where $\sigma$ is a state and $t$
is a positive real number called the \emph{trace} of $t\cdot\sigma$.
The notation is $\mathrm{tr}\left(t\cdot\sigma\right)=t.$ The elements
in the cone are called \emph{positive operators} or un-normalized
states. The cone is called the \emph{state cone}. Positive elements
can be added by 
\[
t_{1}\cdot\sigma_{1}+t_{2}\cdot\sigma_{2}=\left(t_{1}+t_{2}\right)\cdot\left(\frac{t_{1}}{t_{1}+t_{2}}\sigma_{1}+\frac{t_{2}}{t_{1}+t_{2}}\sigma_{2}\right).
\]
The state cone spans a partially ordered vector space $V_{\Omega}$
and the trace extends linearly to $V_{\Omega}$. Thus, the states
may be considered as positive elements of an ordered vector space
with trace 1.

Let $m\in\mathcal{M}$ denote a measurement with values $v$ in some
set $\mathcal{V}.$ If $\sigma$ is a state then the measurement is
given by a probability measure $m\left(\sigma\right)$ over $\mathcal{V}.$
Thus for each $v\in\mathcal{V}$ we have a probability $m\left(\sigma\right)\left(v\right)\in\left[0,1\right].$
For each $v$ the measurement $m$ maps $\Omega$ into $\left[0,1\right]$
and such a mapping is called a \emph{test} and it is an element in
$\Omega_{+}^{*}$ , i.e. the dual cone of the positive elements. In
the literature on generalized probabilistic theories a test is often
called an effect, but in this paper it is called a test, which is
the well established in the statistical literature. The test that
maps $x\in V_{\Omega}$ into $\lambda\mathrm{tr}\left(x\right)$ will
be denoted $\lambda.$ In particular the test $1$ maps $\Omega$
into $1.$ Since the total probability of a measurement is 1 we have
$\sum_{v}m\left(\cdot\right)\left(v\right)=1.$ A measurement can
be represented as a test valued measure. In the Hilbert space formalism
the tests are given by positive operators and the measurements are
given by positive operator valued measures (POVM). We say that two
states $\rho$ and $\sigma$ are mutually singular if there exists
a test $\phi$ such that $\phi\left(\rho\right)=0$ and $\phi\left(\sigma\right)=1.$

Let $m_{1},m_{2}\in\mathcal{M}$ with values in $\mathcal{V}_{1}$
and $\mathcal{V}_{2}.$ If $M:V_{1}\to V_{2}$ is some map such that
\[
m_{2}\left(\cdot\right)\left(v_{2}\right)=\sum_{v_{1}:M\left(v_{1}\right)=v_{2}}m_{1}\left(\cdot\right)\left(v_{1}\right)
\]
 then the measurement $m_{1}$ is at least as informative about the
state as $m_{2}$, and $m_{1}$ is called a \emph{fine-graining} of
$m_{2}.$ If 
\[
m_{2}\left(\cdot\right)\left(v_{2}\right)\propto m_{1}\left(\cdot\right)\left(v_{1}\right)
\]
for all values $v_{1}$ for which $M\left(v_{1}\right)=v_{2}$, then
the fine-graining is said to be \emph{trivial}. A measurement is \emph{fine-grained}
if all fine-grainings are trivial. Note that a measurement $m$ is
fine grained if all tests $m\left(\cdot\right)\left(v\right)$ lie
on extreme rays of $\Omega_{+}^{*}$. Therefore any measurement has
a fine-graining that is fine grained when the state space $\Omega$
is a convex body.

Let $\Omega_{1}$ and $\Omega_{2}$ denote two state spaces. An affine
map $\Phi:\Omega_{1}\to\Omega_{2}$ is called and \emph{affinity}.
Let $S:\Omega_{1}\to\Omega_{2}$ and $R:\Omega_{2}\to\Omega_{1}$
denote affinities. If $R\circ S=id_{\Omega_{1}}$then $S$ is called
a \emph{section} and $R$ is called a \emph{retraction}. A \emph{frame}
is a section $S:\Omega_{1}\to\Omega_{2}$ where $\Omega_{1}$ is a
simplex. 

Let $\Omega$ denote the state space of a group of observers. The
set of measurements $M_{A}$ of a single observer Alice is a subset
of the set of all measurements $M$ of the whole group of observers.
Therefore the there is a surjective affinity $\mathbb{E}_{A}:\Omega\to\Omega_{A}$.
Assume that Alice and Bob are observers that can perform measurements
independently. Further assume that the choice of measurement made
by Alice does not influence the outcome of a measurement made by Bob
and that a choice of measurement made by Bob does not influence the
outcome of a measurement made by Alice. This is called the \emph{no-signaling
condition}. If Alice performs the measurement $m_{A}$ and Bob performs
the measurement $m_{B}$, then the joint measurement is denoted $m_{A}\otimes m_{B}$.
Further assume that Alice and Bob can communicate. Then Alice and
Bob can perform any measurement of the form $\sum s_{i}\cdot m_{A}\otimes m_{B}$.
If Alice and Bob together can only perform measurements of the form
$\sum s_{i}\cdot m_{A}\otimes m_{B}$ their joint state space is a
subset of $V_{\Omega_{A}}\otimes V_{\Omega_{B}}.$ Assume further
that Alice and Bob can prepare states individually. If Alice prepares
the state $\sigma_{A}$ and Bob prepares the state $\sigma_{B}$ then
their joints state is $\sigma_{A}\otimes\sigma_{B}\in V_{\Omega_{A}}\otimes V_{\Omega_{B}}.$
The convex hull of $\left\{ \left.\sigma_{A}\otimes\sigma_{B}\right|\sigma_{A}\in\Omega_{A}\textrm{ and }\sigma_{B}\in\Omega_{B}\right\} $
is denoted $\Omega_{A}\otimes_{min}\Omega_{B}$ and the elements are
called separable states. We assume that $\Omega_{A}\otimes_{min}\Omega_{B}\subseteq\Omega.$

\section{Jordan algebras\label{sec:Jordan-algebras}}

Here we will recall some fact and concepts related to Jordan algebras.
A more detailed exposition can be found in \cite{McCrimmon2004,Baes2007}.
In the Hilbert space formalism of quantum physics the states are represented
as density matrices on a complex Hilbert space. Classical probability
distributions can be identified with density matrices that are diagonal.
In the set of self adjoint matrices one may define a product $\bullet$
by 
\[
A\bullet B=\frac{1}{2}\left(AB+BA\right)\,.
\]
 This product makes the set of Hermitean matrices into an algebra
over the real numbers and the product $\bullet$ satisfies
\begin{equation}
A\bullet\left(B\bullet\left(A\bullet A\right)\right)=\left(A\bullet B\right)\bullet\left(A\bullet A\right).\label{eq:ass}
\end{equation}
With this equation fulfilled it is possible to define $A^{n}=A\bullet A\bullet\dots\bullet A$
without specifying where the parenthesis have to be placed. Further
we have that 
\begin{equation}
\sum_{i}A_{i}^{2}=0\label{eq:formReal}
\end{equation}
if and only if $A_{i}=0$ for all $i.$ The dimension of the algebra
is defined as the dimension of the Jordan algebra as a real vector
space. A finite dimensional algebra over the real numbers with a product
$\bullet$ satisfying the properties (\ref{eq:ass}) and (\ref{eq:formReal})
is called an \emph{Euclidean Jordan algebra}. 

Elements in an Euclidean Jordan algebra of the form $A\bullet A$
are called positive elements and they form a pointed cone. Further,
an Euclidean Jordan algebra has a \emph{trace} $\mathrm{tr}$ that
maps positive elements into positive numbers and such 
\[
\mathrm{tr}\left(\left(A\bullet B\right)\bullet C\right)=\mathrm{tr}\left(A\bullet\left(B\bullet C\right)\right)\,.
\]
A \emph{state in a Jordan algebra} is a positive element of trace
1. The \emph{rank of a Jordan algebra} is the Caratheodory rank of
the state space of algebra. An Euclidean Jordan algebra has an inner
product defined by 
\[
\left\langle A,B\right\rangle =\mathrm{tr}\left(A\bullet B\right).
\]
With this inner product the positive cone becomes \emph{self dual}.

An element $E$ of a Jordan algebra is \emph{idempotent} if $E^{2}=E.$
Elements $A$ and $B$ are \emph{orthogonal} if $A\bullet B=0.$ With
these definitions any element $A$ has a spectral decomposition
\[
A=\sum\lambda_{i}E_{i}
\]
 where $E_{i}$ are orthogonal idempotent. If the spectral values
$\lambda_{i}$ are different, the decomposition is unique. Therefore
one can define
\[
f\left(A\right)=\sum f\left(\lambda_{i}\right)E_{i}\,.
\]

The associative Euclidean Jordan algebras correspond to classical
probability theory, where the state space is a simplex. Any Euclidean
Jordan algebra $\mathcal{J}$ can be written as a direct sum $\bigoplus\mathcal{J}_{i}$
of Jordan algebras where each of the Jordan algebras $\mathcal{J}_{i}$
is simple. The simple Euclidean Jordan algebras belong to one of the
the following five types.
\begin{itemize}
\item $M_{n}\left(\mathbb{R}\right)$ Real valued Hermitean $n\times n$
matrices.
\item $M_{n}\left(\mathbb{C}\right)$ Complex valued Hermitean $n\times n$
matrices.
\item $M_{n}\left(\mathbb{H}\right)$ Quaternionic valued Hermitean $n\times n$
matrices.
\item $M_{3}\left(\mathbb{O}\right)$ Octonionic valued Hermitean $3\times3$
matrices.
\item $Jspin\left(d\right)$ Spin factors where the state space has the
shape of a $d$-dimensional solid ball.
\end{itemize}
The Jordan algebra $M_{3}\left(\mathbb{O}\right)$ is called the \emph{exceptional
Jordan algebra} and Jordan algebras that does not contain such an
exceptional component are called \emph{special Jordan algebras}. All
special Jordan algebras appear as sections of $M_{n}\left(\mathbb{C}\right)$
for some value of $n$. In this sense all special Jordan algebras
have representations as physical systems. If a section of the set
of complex valued Hermitean matrices is required to be completely
positive then the section can be represented as a set of complex valued
Hermitean matrices.

It is an important question why exactly the complex valued Hermitean
matrices are so good in modeling quantum physics compared with the
other simple Jordan algebras. Actually Adler has attempted to model
quantum theory using quaternions \cite{Adler1995}, and there have
been a number of attempts to let the exceptional Jordan algebra play
an active role in modeling physics \cite{Guenaydin1973,Manogue2010}.
One important property that single out the complex valued Hermitean
matrices is that there is a canonical tensor product construction
within the category of complex valued Hermitean matrices with completely
positive maps as morphisms \cite{Barnum2016}. 
\begin{example}
\label{exa:reeltprodukt}Assume that the whole state space $\Omega$
can be represented as real non-negative definite $4\times4$ matrices
with trace 1. The dimension of this state space is 9. Let $A$ and
$B$ denote a $2\times2$ real Hermitean matrices. Then $A\otimes B$
can embedded in $\Omega$ as \label{sec:Entropy-in-Jordan}
\begin{multline*}
\left(\begin{array}{cc}
a_{11} & a_{12}\\
a_{21} & a_{22}
\end{array}\right)\otimes\left(\begin{array}{cc}
b_{11} & b_{12}\\
b_{21} & b_{22}
\end{array}\right)=\\
\left(\begin{array}{cccc}
a_{11}b_{11} & a_{11}b_{12} & a_{12}b_{11} & a_{12}b_{12}\\
a_{11}b_{21} & a_{11}b_{22} & a_{12}b_{21} & a_{12}b_{22}\\
a_{21}b_{11} & a_{21}b_{12} & a_{22}b_{11} & a_{22}b_{12}\\
a_{21}b_{21} & a_{21}b_{22} & a_{22}b_{21} & a_{22}b_{22}
\end{array}\right).
\end{multline*}
The vector space of Hermitean $2\times2$ matrices has dimension 3.
Therefore the tensor product has dimension 9. Hence the set of tensors
with trace 1 has dimension 8, so it has a lower dimension than set
of states on the whole space. Therefore there are joint states on
the whole space that cannot be distinguished by local measurements.
Hence the tomography condition is not fulfilled. 
\end{example}
There are a number of ways to characterize Jordan algebras. Above
we have defined the Jordan algebras algebraically. A classic result
is that a real vector space with a self-dual homogeneous cone can
be represented as a Jordan algebra \cite{Jordan1934}. A new result
is that a state space that is spectral and where any pair of frames
can be mapped into each other, can be represented by a Jordan algebra
\cite{Barnum2019}.

For Jordan algebras it is possible to define a well-behaved entropy
function and an associated divergence function. In \cite{Harremoes2017c}
it was proved that if a state space has rank 2 and it has a monotone
Bregman divergence then it can be represented as a Jordan algebra
(spin factor). Similar representation theorems for state spaces of
higher rank are not yet available, so in this paper we focus on other
consequences of the existence of entropy function or Bregman divergences.

\section{Entropy in Jordan algebras\label{sec:Entropy-in-Jordan-1}}

In generalized probabilistic theories there are two ways of defining
entropy \cite{Short2010}. The \emph{decomposition entropy} of a state
$\sigma$ is given by 
\[
\breve{H}\left(\sigma\right)=\inf_{\sum p_{i}\cdot\sigma_{i}=\sigma}H\left(\left(p_{i}\right)_{i}\right).
\]
Here the infimum is taken over all mixtures $\sum p_{i}\cdot\sigma_{i}=\sigma$
where $\sigma_{i}$ are pure states and $H\left(\left(p_{i}\right)_{i}\right)$
denotes the Shannon entropy of the probability vector $\left(p_{i}\right)_{i}$.
Versions of this definition can also be found in \cite{Harremoes2017b},
but they dates back to \cite{Ullman1989}. Note that the definition
of spectral entropy in \cite{Krumm2017} is closely related but slightly
different.

Following \cite{Short2010} one can define the \emph{fine grained
entropy} of a state in a generalized probabilistic theory by
\[
\hat{H}\left(\sigma\right)=\inf_{m}H\left(m\left(\sigma\right)\right)
\]
where the infimum has been taken over all fine grained measurements $m$
on $\Omega$. This fine grained entropy is a strictly concave function.
\begin{lemma}
\label{lem:spectralentropy}If the state space $\Omega$ is spectral
a decomposition that minimizes the decomposition entropy is spectral.
\end{lemma}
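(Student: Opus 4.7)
The plan is to prove the contrapositive: any pure-state decomposition of $\sigma$ that fails to be spectral can be strictly improved, so any decomposition attaining the infimum in $\breve{H}(\sigma)$ must already consist of pairwise mutually singular pure components.

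First I would invoke compactness to guarantee that a minimizer exists. By a Carath\'eodory-type argument the optimization may be restricted to decompositions with at most some bounded number of pure terms, this set is compact in a suitable product topology on $\Omega^{n}\times\Delta^{n-1}$, and Shannon entropy is continuous on it, so the infimum is attained by some concrete decomposition $\sigma = \sum_i p_i \sigma_i$ which I then analyse.

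Assume for contradiction that this minimizer contains two components $\sigma_{1},\sigma_{2}$ that are not mutually singular. Set $\tau = p_{1}\sigma_{1}+p_{2}\sigma_{2}$ and apply the spectrality hypothesis to the normalized state $\tau/(p_{1}+p_{2})$ to obtain a spectral decomposition $\tau = \mu_{1}\tau_{1}+\mu_{2}\tau_{2}$ with $\tau_{1},\tau_{2}$ pure and mutually singular and $\mu_{1}+\mu_{2}=p_{1}+p_{2}$. The heart of the argument is the rank-$2$ Schur-Horn claim that $(\mu_{1},\mu_{2})$ strictly majorizes $(p_{1},p_{2})$, equivalently $\mu_{1}\mu_{2}<p_{1}p_{2}$. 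In the Jordan-algebra setting this is transparent: the rank-$2$ subalgebra generated by $\sigma_{1}$ and $\sigma_{2}$ carries a determinant-type invariant whose value equals $p_{1}p_{2}$ in the non-orthogonal representation of $\tau$ but drops strictly as soon as the two representing pure states fail to be mutually singular.

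Combining strict majorization with the strict Schur-concavity of the Shannon entropy gives $H(\mu_{1},\mu_{2})<H(p_{1},p_{2})$, so substituting $(\mu_{1},\tau_{1},\mu_{2},\tau_{2})$ for $(p_{1},\sigma_{1},p_{2},\sigma_{2})$ in the original decomposition produces a new pure-state decomposition of $\sigma$ with strictly smaller Shannon entropy, contradicting minimality. Thus every pair of components of a minimizer is mutually singular, i.e.\ the decomposition is spectral. The main obstacle is the rank-$2$ majorization step at the level of an abstract spectral state space: for Jordan algebras it reduces to a two-dimensional spectral-theorem computation, but phrasing it intrinsically for a general spectral convex body requires the kind of structural input supplied by the representation results recalled in Section~\ref{sec:Jordan-algebras}.
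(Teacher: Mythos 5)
Your overall exchange strategy---assume a minimizing decomposition, and if two of its components are not mutually singular, replace their combination by its spectral decomposition and derive a strict entropy decrease via majorization and strict Schur concavity---is a sensible route, and it is plausibly in the spirit of the argument the paper relies on (the paper's own ``proof'' is only a citation of \cite{Harremoes2017b}). But there is a genuine gap at precisely the step you yourself flag as the heart of the argument: the strict majorization claim $\mu_{1}\mu_{2}<p_{1}p_{2}$ is justified only by appeal to a determinant-type invariant on ``the rank-$2$ subalgebra generated by $\sigma_{1}$ and $\sigma_{2}$,'' i.e.\ by Jordan-algebraic structure. The lemma, however, is stated for a general spectral state space; at this point in the paper no Jordan representation is available, and obtaining one from spectrality alone would require the extra symmetry hypotheses of \cite{Barnum2019}, which the lemma does not assume. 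In a bare spectral convex body it is not even guaranteed that $\tau=p_{1}\sigma_{1}+p_{2}\sigma_{2}$ admits a spectral decomposition with only two components, so both the reduction to a two-outcome majorization and the determinant computation are unavailable without further structural input. As written, your argument proves the Jordan-algebra case but not the lemma in its stated generality, and you acknowledge rather than close this gap.

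Two further points. First, a ``spectral'' decomposition here should be read as a decomposition over a frame, i.e.\ into jointly perfectly distinguishable pure states (the image of the vertices of a simplex under a section), which is in general stronger than the pairwise mutual singularity your exchange argument would deliver; you would still need to pass from pairwise singularity of all components of the minimizer to joint distinguishability. Second, your existence argument is not needed (the statement presupposes a minimizer) and is itself leaky: the extreme points of a convex body need not form a closed set, so compactness of the space of pure decompositions with boundedly many terms is not automatic, and the Carath\'eodory reduction should be argued via concavity of $H$ on the polytope of weight vectors over a fixed finite set of pure states with prescribed barycenter, not from Carath\'eodory's theorem alone.
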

\begin{svmultproof}
This was essentially proved in \cite{Harremoes2017b} although the
terminology regarding spectrality was slightly different. 
\end{svmultproof}

\begin{theorem}
\label{lem:entropiulighed}If the state space $\Omega$ is spectral
then for any state $\sigma$ the following inequality holds
\[
\hat{H}\left(\sigma\right)\leq\breve{H}\left(\sigma\right).
\]
\end{theorem}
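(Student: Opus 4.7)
The plan is to exhibit one specific fine-grained measurement whose outcome entropy equals $\breve{H}(\sigma)$; the stated inequality then follows immediately from the definition of $\hat{H}$ as an infimum. First I invoke Lemma~\ref{lem:spectralentropy}: because $\Omega$ is spectral, the infimum in the definition of $\breve{H}(\sigma)$ is attained by a spectral decomposition, so one may write $\sigma=\sum_{i}p_{i}\cdot\sigma_{i}$ with $\sigma_{i}$ pure and pairwise mutually singular, and $\breve{H}(\sigma)=H\!\left((p_{i})_{i}\right)$.

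Next I would build the associated spectral measurement. Pairwise mutual singularity, extended from pairs to a full family by spectrality, yields tests $\phi_{i}\in\Omega_{+}^{*}$ with $\phi_{i}(\sigma_{j})=\delta_{ij}$ and $\sum_{i}\phi_{i}=1$. These tests assemble into a measurement $m$ for which
\[
m(\sigma)(v_{i})=\phi_{i}(\sigma)=\sum_{j}p_{j}\,\phi_{i}(\sigma_{j})=p_{i},
\]
so that $H\!\left(m(\sigma)\right)=H\!\left((p_{i})_{i}\right)=\breve{H}(\sigma)$. Once $m$ is shown to be fine-grained, the conclusion $\hat{H}(\sigma)\leq H(m(\sigma))=\breve{H}(\sigma)$ is immediate.

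The hard part, and the main obstacle, is verifying that the constructed $m$ is fine-grained, i.e.\ that each $\phi_{i}$ lies on an extreme ray of $\Omega_{+}^{*}$. One cannot simply take a fine-graining of $m$ and hope the estimate still holds, because any non-trivial fine-graining strictly increases the Shannon entropy of the outcome distribution: if $\phi_{i}=\sum_{k}\phi_{i}^{(k)}$, then $\phi_{i}^{(k)}(\sigma_{j})=0$ for $j\neq i$ and the outcome distribution refines $(p_{i})_{i}$ into $(p_{i}q_{k}^{(i)})_{i,k}$ with $q_{k}^{(i)}=\phi_{i}^{(k)}(\sigma_{i})$, giving entropy $H((p_{i})_{i})+\sum_{i}p_{i}H((q_{k}^{(i)})_{k})\geq H((p_{i})_{i})$. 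Hence $m$ itself must already be fine-grained. To see this, the idea is to use purity of $\sigma_{i}$: any decomposition $\phi_{i}=\psi+\psi'$ in $\Omega_{+}^{*}$ forces $\psi(\sigma_{j})=\psi'(\sigma_{j})=0$ for $j\neq i$ and $\psi(\sigma_{i})+\psi'(\sigma_{i})=1$, and in a spectral state space the face of $\Omega_{+}^{*}$ determined by these annihilation conditions is one-dimensional, so $\psi$ and $\psi'$ are both proportional to $\phi_{i}$.

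In effect, the proof rests on the duality between pure orthogonal frames of states and frames of extreme tests that is built into spectrality. Given this duality, the theorem reduces to the two short computational steps above; without it, the direction of the fine-graining inequality would go the wrong way, which is why I expect this duality step (or a direct appeal to it from the literature cited for Lemma~\ref{lem:spectralentropy}) to carry the entire weight of the argument.
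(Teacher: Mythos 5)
Your overall strategy is the same as the paper's: exhibit a fine-grained measurement whose outcome distribution on $\sigma$ reproduces the weights of a pure decomposition, and then use that $\hat{H}$ is an infimum over fine-grained measurements. The paper does this for an \emph{arbitrary} pure decomposition $\sigma=\sum p_{i}\sigma_{i}$ (asserting without argument that such a decomposition is induced by a fine-grained measurement with $m(\sigma)(i)=p_{i}$) and only takes the infimum over decompositions at the end; it does not use Lemma~\ref{lem:spectralentropy} in this proof at all --- that lemma is invoked only in the next theorem. Your restriction to the minimizing spectral decomposition is a reasonable economy, since it is exactly for that decomposition that spectrality supplies a distinguishing measurement with $\phi_{i}(\sigma_{j})=\delta_{ij}$ and $\sum_{i}\phi_{i}=1$, and you correctly identify that one cannot instead pass to a fine-graining of a non-fine-grained $m$, because refinement can only increase the outcome entropy.

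The genuine gap is in your justification of fine-grainedness. The claim that the face of $\Omega_{+}^{*}$ cut out by the conditions $\psi(\sigma_{j})=0$ for $j\neq i$ is one-dimensional is false unless the frame $\{\sigma_{j}\}$ is maximal. Take $\sigma$ itself pure in a system of rank at least $2$: the decomposition is trivial, the annihilation conditions are empty, the face is all of $\Omega_{+}^{*}$, and your $\phi_{1}$ may simply be the unit test, which is not extreme. More generally, whenever $\sigma$ is not of full rank, a test with $\phi_{i}(\sigma_{i})=1$ and $\phi_{i}(\sigma_{j})=0$ can carry arbitrary positive contributions supported ``away from'' the frame, so it need not lie on an extreme ray. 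The repair needs two ingredients: outcomes with $p_{i}=0$ may be refined arbitrarily without changing $H(m(\sigma))$, and for each $i$ with $p_{i}>0$ one must produce an \emph{extreme} test taking the value $1$ on the pure state $\sigma_{i}$ and $0$ on the other frame elements. That last step is precisely where self-duality or the Jordan structure enters (in the self-dual case one can take $\phi_{i}=\langle\sigma_{i},\cdot\rangle$, which is extreme and evaluates to $1$ on $\sigma_{i}$); mere spectrality, as you invoke it, does not deliver the frame--extreme-test duality you lean on. To be fair, the paper's own proof hides the same difficulty behind the bare assertion ``since this measurement is fine grained,'' so your proposal is no less rigorous than the source, but the step you yourself flag as carrying the entire weight is not established by the face argument you sketch.
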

\begin{svmultproof}
Let $\sigma=\sum p_{i}\sigma_{i}$ be a decomposition of $\sigma$
where the states $\sigma_{i}$ are pure. To this decomposition there
corresponds a measurement $m$ such that 
\[
m\left(\sigma\right)\left(i\right)=p_{i}\,.
\]
Since this measurement is fine grained we have 
\[
\hat{H}\left(\sigma\right)  \leq H\left(m\left(\sigma\right)\right)
  =H\left(\left(p_{i}\right)_{i}\right)\,.
\]
Therefore 
\[
\hat{H}\left(\sigma\right)  \leq\inf_{\sum p_{i}\sigma_{i}=\sigma}H\left(\left(p_{i}\right)_{i}\right)
  =\breve{H}\left(\sigma\right).
\]
\end{svmultproof}

\begin{theorem}
If the state space $\Omega$ is spectral and the cone $\Omega_{+}$
is self dual then 
\begin{equation}
\hat{H}\left(\sigma\right)=\breve{H}\left(\sigma\right)=\textrm{-}\left\langle \sigma,\ln\left(\sigma\right)\right\rangle .\label{eq:EntropiLigning}
\end{equation}
\end{theorem}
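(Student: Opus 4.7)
The plan is to verify the two equalities in (\ref{eq:EntropiLigning}) separately, using Lemma \ref{lem:spectralentropy} and Theorem \ref{lem:entropiulighed} as the two pillars and invoking self-duality only to convert measurement data into an inner-product computation.

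First I would fix a spectral decomposition $\sigma=\sum_{i}\lambda_{i}E_{i}$ into orthogonal pure states, which by Lemma \ref{lem:spectralentropy} is among the decompositions minimizing $\breve{H}$; hence $\breve{H}(\sigma)=H((\lambda_{i}))=-\sum_{i}\lambda_{i}\ln\lambda_{i}$. The middle equality then reduces to the computation $-\langle\sigma,\ln\sigma\rangle=-\sum_{i}\lambda_{i}\ln\lambda_{i}$. For this I define $\ln\sigma=\sum_{i}\ln(\lambda_{i})E_{i}$ by functional calculus on the spectral decomposition (using the convention $0\ln 0=0$) and expand bilinearly: $\langle\sigma,\ln\sigma\rangle=\sum_{i,j}\lambda_{i}\ln(\lambda_{j})\langle E_{i},E_{j}\rangle$. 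The self-duality of $\Omega_{+}$ lets me identify each pure state $E_{j}$ with a test that annihilates states singular to it and takes the value $1$ on $E_{j}$ itself, so $\langle E_{i},E_{j}\rangle=\delta_{ij}$; only the diagonal terms survive and the middle equality drops out.

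Next I turn to $\hat{H}(\sigma)=\breve{H}(\sigma)$. Theorem \ref{lem:entropiulighed} already gives $\hat{H}(\sigma)\leq\breve{H}(\sigma)$, so the task is the reverse inequality. Take any fine-grained measurement $m$. Because $\Omega_{+}$ is self-dual, extreme rays of $\Omega_{+}^{*}$ correspond to pure states, so each effect has the form $\phi_{v}=t_{v}F_{v}$ with $F_{v}$ pure and $t_{v}\in(0,1]$, and $\sum_{v}t_{v}F_{v}=\mathbb{1}$. A direct calculation gives $m(\sigma)(v)=\sum_{i}\lambda_{i}R_{v,i}$ where $R_{v,i}=t_{v}\langle F_{v},E_{i}\rangle$. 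Self-duality and the two resolutions of the unit then yield $\sum_{v}R_{v,i}=\langle\mathbb{1},E_{i}\rangle=1$, $\sum_{i}R_{v,i}=t_{v}\langle F_{v},\mathbb{1}\rangle=t_{v}$, and $\sum_{v}t_{v}=n$, the rank of the algebra.

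The hard part will be converting this algebraic structure into the entropy inequality $H(m(\sigma))\geq H((\lambda_{i}))$, since stochastic maps do not in general increase Shannon entropy. My plan is to show that $m(\sigma)$ is majorized by the padded vector $\lambda$: extend the spectral decomposition by zero-weight pure states so the index set has length $|V|$ (or at least $\max(n,|V|)$), append $|V|-n$ zero columns to $R$, and redistribute the deficits $1-t_{v}$ in each row among the new columns. The accounting $\sum_{v}(1-t_{v})=|V|-n$ ensures the appended columns sum to $1$, so the resulting matrix is doubly stochastic; by Hardy--Littlewood--P\'olya, $m(\sigma)\prec\lambda$, and Schur-concavity of Shannon entropy yields $H(m(\sigma))\geq H((\lambda_{i}))=\breve{H}(\sigma)$. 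Taking the infimum over $m$ gives $\hat{H}(\sigma)\geq\breve{H}(\sigma)$, closing the loop.
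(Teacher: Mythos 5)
Your proposal is correct, and its overall skeleton coincides with the paper's: both compute $\breve{H}(\sigma)=H\left(\left(\lambda_{i}\right)_{i}\right)=-\left\langle \sigma,\ln\sigma\right\rangle$ from the spectral decomposition (Lemma \ref{lem:spectralentropy} plus orthonormality of the spectral frame), both quote Theorem \ref{lem:entropiulighed} for $\hat{H}\leq\breve{H}$, and both obtain the reverse inequality by showing that every fine-grained measurement turns the spectral probability vector into its image under a doubly stochastic kernel, so that Schur concavity of Shannon entropy gives $H\left(m\left(\sigma\right)\right)\geq H\left(\left(\lambda_{i}\right)_{i}\right)$. The genuine difference is in how the fine-grained effects are treated. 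The paper asserts that fine-grained effects ``must be states'', i.e. trace-one, which forces the number of outcomes to equal the rank and lets it verify bistochasticity in one line by evaluating the kernel on the uniform mixture $\tilde{\sigma}$; as stated this only covers measurements with normalized rank-one effects (in the quantum case, projective rank-one POVMs), whereas the infimum defining $\hat{H}$ also runs over fine-grained measurements with subnormalized effects $t_{v}F_{v}$, $t_{v}<1$, such as informationally complete rank-one POVMs. Your version handles exactly this general case: you keep the weights $t_{v}$, use $\sum_{v}t_{v}F_{v}=1$ and self-duality to get column sums $1$ and row sums $t_{v}$, then pad the spectral vector with zeros and complete the substochastic matrix to a doubly stochastic one before invoking Hardy--Littlewood--P\'olya. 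So your route buys generality and in effect repairs a sloppy step in the paper, at the cost of two small details you should make explicit: $t_{v}\leq1$ (evaluate the effect at the pure state $F_{v}$ itself), and the existence of a nonnegative completion of the padded matrix with the prescribed row deficits $1-t_{v}$ and unit column sums (a standard transportation-matrix construction, possible because both totals equal $\left|V\right|-n$); also make sure the first $n$ columns correspond to a complete frame summing to the unit, with zero eigenvalues included, so that the row-sum computation is valid.
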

\begin{proof}
Let $M$ denote a fine grained measurement. The measurement is given
by a positive test valued measure, i.e. there exists $\rho_{j}\geq0$
such that $\sum\rho_{j}=1$ and such that
\[
M\left(\rho\right)\left(j\right)=\left\langle \rho_{j},\rho\right\rangle \,.
\]
Since the measurement is fine grained $\rho_{j}$ must be states.
Thus,
\[
M\left(\sigma\right)\left(j\right)=\left\langle \rho_{j},\sigma\right\rangle =\left\langle \rho_{j},\sum_{i}p_{i}\sigma_{i}\right\rangle =\sum_{i}p_{i}\left\langle \rho_{j},\sigma_{i}\right\rangle.
\]
If $\text{\ensuremath{\tilde{\sigma}} is the state \ensuremath{\sum_{i}\frac{1}{r}}\ensuremath{\cdot}\ensuremath{\sigma_{i}}}$
then 
\[
M\left(\tilde{\sigma}\right)=\left\langle \rho_{j},\sum_{i}\frac{1}{r}\cdot\sigma_{i}\right\rangle _{j}=\frac{1}{r}\left\langle \rho_{j},1\right\rangle _{j}=\frac{1}{r}.
\]
the Markov kernel $\left(p_{i}\right)_{i}\to\sum_{i}p_{i}\left\langle \rho_{j},\sigma_{i}\right\rangle _{j}$
maps the uniform distribution $\left(\frac{1}{r}\right)_{i}$ into
the uniform distribution $\left(\frac{1}{r}\right)_{i}$, i.e. the
Markov kernel is bi-stochastic. Since bi-stochastic Markov kernels
increase entropy we have 
\begin{align*}
H\left(M\left(\sigma\right)\right) & =H\left(\left\langle \rho_{j},\sigma\right\rangle _{j}\right)\geq H\left(\left(p_{i}\right)_{i}\right)=\textrm{-}\left\langle \sigma,\ln\left(\sigma\right)\right\rangle .
\end{align*}
Therefore 
\begin{equation}
\textrm{-}\left\langle \sigma,\ln\left(\sigma\right)\right\rangle \leq\hat{H}\left(\sigma\right).\label{eq:fineEntUlighed-1}
\end{equation}
Now the result is obtained by combining Lemma \ref{lem:entropiulighed}
and Theorem \ref{lem:spectralentropy} with inequality (\ref{eq:fineEntUlighed-1}).
\end{proof}

\begin{definition}
The entropy $H$ of a state $\sigma$ in a Jordan algebra is given
as the common value of any of the expressions given in Equation (\ref{eq:EntropiLigning}).
\end{definition}
\begin{corollary}
In a finite Euclidean Jordan algebra the entropy $\textrm{-}\left\langle \sigma,\ln\left(\sigma\right)\right\rangle $
is a concave function.
\end{corollary}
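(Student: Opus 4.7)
The plan is to reduce concavity of $-\langle\sigma,\ln(\sigma)\rangle$ to concavity of the fine-grained entropy $\hat{H}$ via the identification supplied by the preceding theorem. First I would verify that the hypotheses of that theorem are met by any finite Euclidean Jordan algebra. The spectral decomposition $A=\sum\lambda_{i}E_{i}$ recalled in Section \ref{sec:Jordan-algebras} shows that $\Omega$ is spectral, and the inner product $\langle A,B\rangle=\mathrm{tr}(A\bullet B)$ makes the positive cone self-dual. Both hypotheses of the previous theorem are thus in force, so
\[
-\langle\sigma,\ln(\sigma)\rangle=\hat{H}(\sigma)
\]
on the state space of the algebra.

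Second I would establish (or cite) concavity of $\hat{H}$. For each fine-grained measurement $m$ the map $\sigma\mapsto m(\sigma)$ is affine, and the Shannon entropy $H$ is concave, so $\sigma\mapsto H(m(\sigma))$ is concave. The fine-grained entropy $\hat{H}(\sigma)=\inf_{m}H(m(\sigma))$ is therefore a pointwise infimum of concave functions. Since every concave function is itself the infimum of the family of affine functions that majorize it, a double infimum reduces to a single infimum over affine functions, which is automatically concave. Hence $\hat{H}$ is concave; this is moreover already asserted in the text immediately following the definition of $\hat{H}$.

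Combining the two steps gives concavity of $-\langle\sigma,\ln(\sigma)\rangle$ at once. There is no real obstacle here: the content is packaged into the preceding theorem, and the only supplementary observation needed is the standard fact that an infimum of concave functions is concave. If one wished to avoid the detour through $\hat{H}$ entirely, one could instead argue directly from the decomposition entropy side, writing $\breve{H}$ as an infimum over mixtures of pure states of the concave function $H((p_{i})_{i})$ and using concavity of that infimum; but going via $\hat{H}$ is cleaner because the affine structure of $\sigma\mapsto m(\sigma)$ is immediate, whereas the decomposition representation requires an additional optimization over choices of pure states $\sigma_{i}$.
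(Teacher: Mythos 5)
Your proof is correct and follows the paper's own route: identify $-\left\langle \sigma,\ln\left(\sigma\right)\right\rangle$ with the fine-grained entropy $\hat{H}$ via the preceding theorem and invoke concavity of $\hat{H}$ (which the paper simply cites from Short and Wehner, while you additionally verify the spectrality and self-duality hypotheses and prove concavity directly as an infimum of concave functions). One caveat on your closing aside: the analogous argument for $\breve{H}$ does not go through verbatim, because there the infimum runs over decompositions of $\sigma$ itself rather than over a $\sigma$-independent family of concave functions; since you do not rely on that remark, the proof stands as written.
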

\begin{svmultproof}
Concavity of $H$ follows because $H$ equals the fine grained entropy
and the fine grained entropy is concave \cite{Short2010}. 
\end{svmultproof}

Concavity of the entropy function $H$ on Jordan algebras was proved
in \cite{Harremoes2017b} with a more involved proof.

\section{Bregman divergences and sufficiency conditions\label{sec:Bregman-divergences}}

We consider a optimization problem where we want to optimize some
quantity defined on the state space. In thermodynamics the goal is
typically to extract energy from the system by some feasible interaction
with the system. Our approach makes sense for any convex optimization
problem and in principle the function may represent other objectives
such as the amount of money one may obtained by trading or the code
length that is obtained after using a certain data compression procedure.
Various examples of such optimization problems are given in \cite{Harremoes2017}.
In this paper the objective function will be energy. 

Assume that the system is in state $\rho\in\Omega$ and that we apply
some action $a$ from a set of feasible actions $\mathcal{A}$. Then
the mean energy that we extract will be denoted 
\[
\left\langle a,\rho\right\rangle 
\]
and it is an affine function of the state $\rho.$ An action $a$
will be identified with this function $\rho\to\left\langle a,\rho\right\rangle $
so that the actions are considered as elements in the dual space of
the state space. We can define the free energy of state $\rho$ as
\[
F\left(\rho\right)=\sup_{a\in\mathcal{A}}\left\langle a,\rho\right\rangle \,.
\]
In thermodynamics Helmholz free energy is given as $F=U-TS$ so that
the free energy is an affine function minus a term that is proportional
to the entropy function. Then $F$ is a convex function of $\rho.$
The regret of doing action $a$ if the state is $\rho$ is defined
as 
\[
D_{F}\left(\rho,a\right)=F\left(\rho\right)-\left\langle a,\rho\right\rangle .
\]
The interpretation of the regret function is as follows. Assume that
the system is in state $\rho$ but one uses a sub-optimal action $a.$
Then the regret measures the difference between the energy that one
could have extracted $F\left(\rho\right)$ and the energy that one
extracts using action $a.$ For simplicity we will assume that $F$
is differentiable so that to each state $\rho$ there exists a unique
action $a_{\rho}$ such that $F\left(\rho\right)=\left\langle a,\rho\right\rangle .$
For states $\rho,\sigma\in\Omega$ the Bregman divergence is defined
as 
\[
D_{F}\left(\rho,\sigma\right)=D_{F}\left(\rho,a_{\sigma}\right).
\]
 It measures the regret of acting as if the state were $\sigma$ if
it actually is $\rho.$ The Bregman divergence is given by 
\begin{multline*}
D_{F}\left(\rho,\sigma\right)=\\
F\left(\rho\right)-\left(F\left(\sigma\right)+\frac{\mathrm{d}}{\mathrm{d}t}F\left(\left(1-t\right)\sigma+t\rho\right)_{\mid t=0}\right).
\end{multline*}
The formula for the Bregman divergence is often written in terms of
the gradient.
\[
D_{F}\left(\rho,\sigma\right)=F\left(\rho\right)-\left(F\left(\sigma\right)+\left\langle \left.\nabla F\left(\sigma\right)\right|\rho-\sigma\right\rangle \right).
\]

\begin{proposition}[{\cite[Lemma 17]{Harremoes2017b}}]
For Hermitean matrices $A$ and $B$ we have 
\[
\frac{\mathrm{d}}{\mathrm{d}t}\left(\mathrm{tr}\left(f\left(A+tB\right)\right)\right)_{\mid t=0}=\left\langle f'\left(A\right),B\right\rangle \,.
\]
\end{proposition}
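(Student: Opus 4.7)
The plan is to prove the identity first for polynomial $f$ by a direct computation using cyclicity of the trace, and then extend to arbitrary differentiable $f$ by density of polynomials on a compact interval containing the relevant spectra.

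First I would handle $f(x)=x^{n}$. Expanding noncommutatively,
\[
(A+tB)^{n}=A^{n}+t\sum_{k=0}^{n-1}A^{k}\,B\,A^{n-1-k}+O(t^{2}),
\]
and taking traces, each summand $\mathrm{tr}\bigl(A^{k}BA^{n-1-k}\bigr)$ collapses by cyclicity to $\mathrm{tr}\bigl(A^{n-1}B\bigr)$. Hence
\[
\left.\tfrac{\mathrm d}{\mathrm d t}\mathrm{tr}\bigl((A+tB)^{n}\bigr)\right|_{t=0}=n\,\mathrm{tr}\bigl(A^{n-1}B\bigr)=\bigl\langle f'(A),B\bigr\rangle .
\]
Linearity in $f$ extends this to every real polynomial.

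Next I would pass to a general $C^{1}$ function $f$ by approximation. Choose $R>\|A\|+\|B\|$, so that the spectra of $A+tB$ lie in $[-R,R]$ for $|t|\le 1$. Apply Weierstrass to $f'$ on $[-R,R]$ to obtain polynomials $q_{m}\to f'$ uniformly, and set $p_{m}(x):=f(-R)+\int_{-R}^{x}q_{m}(u)\,\mathrm d u$, so that $p_{m}\to f$ uniformly as well. Applying the already-proved polynomial identity to $p_{m}$ and letting $m\to\infty$ yields the desired formula, provided one can exchange the limit and the derivative at $t=0$; this is where continuity of the Hermitian functional calculus in the uniform norm on the spectrum, together with continuity of the trace inner product, does the work.

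The main obstacle I anticipate is precisely that interchange of limit and derivative: one needs $\mathrm{tr}\bigl(p_{m}(A+tB)\bigr)\to\mathrm{tr}\bigl(f(A+tB)\bigr)$ \emph{uniformly} in a neighbourhood of $t=0$ so that the derivatives at $0$ also converge. An alternative route that avoids approximation uses the spectral decomposition $A=\sum_{i}\lambda_{i}|i\rangle\langle i|$ and first-order Hermitian perturbation theory: choosing the basis so that $B$ is block-diagonal on each degenerate eigenspace of $A$ gives $\lambda_{i}(t)=\lambda_{i}+t\langle i|B|i\rangle+O(t^{2})$, and summing $f(\lambda_{i}(t))$ produces $\sum_{i}f'(\lambda_{i})\langle i|B|i\rangle=\mathrm{tr}\bigl(f'(A)B\bigr)$. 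The subtle point on this route is the degeneracy of eigenvalues: individual $\lambda_{i}(t)$ need not be differentiable at a crossing, but the symmetric sum $\sum_{i}f(\lambda_{i}(t))$ is smooth regardless, and the computation goes through. Either route is routine once the polynomial identity is in hand, so I would present the trace-cyclicity calculation and then cite uniform approximation to close the argument.
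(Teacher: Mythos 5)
Your argument is correct, but note that the paper does not actually prove this proposition at all: it is imported verbatim as \cite[Lemma 17]{Harremoes2017b}, so there is no internal proof to match against, and what you have written is a self-contained replacement for that citation. Your route --- cyclicity of the trace on monomials, $\mathrm{tr}\bigl(A^{k}BA^{n-1-k}\bigr)=\mathrm{tr}\bigl(A^{n-1}B\bigr)$, followed by Weierstrass approximation --- is the standard way to establish the identity, and your antiderivative construction $p_{m}(x)=f(-R)+\int_{-R}^{x}q_{m}$ is exactly the right device. One small correction to how you describe the obstacle: uniform convergence of $t\mapsto\mathrm{tr}\bigl(p_{m}(A+tB)\bigr)$ to $\mathrm{tr}\bigl(f(A+tB)\bigr)$ is \emph{not} what lets you pass the derivative through the limit (uniform convergence of functions never controls derivatives); what you need, and what your construction in fact delivers, is uniform convergence of the derivatives $t\mapsto\mathrm{tr}\bigl(q_{m}(A+tB)\,B\bigr)$ to $\mathrm{tr}\bigl(f'(A+tB)\,B\bigr)$ on a neighbourhood of $t=0$ (via $\sup_{[-R,R]}\lvert q_{m}-f'\rvert\to0$ and norm-continuity of the functional calculus), together with pointwise convergence of the undifferentiated functions; the standard interchange theorem then gives the claim at $t=0$. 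Your alternative via first-order perturbation theory is also sound: at $t=0$ Rellich's theorem for Hermitian analytic perturbations justifies $\lambda_{i}'(0)=\langle i\vert B\vert i\rangle$ in a basis diagonalizing the compression of $B$ to each eigenspace of $A$, and degeneracy crossings only threaten differentiability at other values of $t$, which the statement does not need. Either route yields a complete proof where the paper offers only a pointer to the literature.
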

\begin{example}
Assume that the state space can be represented as a state space of
a Jordan algebra. Let $F\left(\sigma\right)=\left\langle \sigma,\ln\left(\sigma\right)\right\rangle $
denote the negative of the entropy. The Bregman divergence corresponding
to $F$ can be computed as 

\begin{dmath}
D_{F}\left(\rho,\sigma\right)=F\left(\rho\right)-\left\{ F\left(\sigma\right)+\frac{\mathrm{d}}{\mathrm{d}t}F\left(\left(1-t\right)\sigma+t\rho\right)_{\mid t=0}\right\} \\
=\left\langle \rho,\ln\left(\rho\right)\right\rangle \\
-\left\{ \left\langle \sigma,\ln\left(\sigma\right)\right\rangle +\left\langle \ln\left(\sigma\right)+1,\rho-\sigma\right\rangle \right\} \\
=\left\langle \rho,\ln\left(\rho\right)-\ln\left(\sigma\right)\right\rangle -\mathrm{tr}\left(\rho-\sigma\right).\label{eq:KLformel}
\end{dmath}

We call this quantity the \emph{information divergence} and denote
it as $D\left(\rho\Vert\sigma\right)$. Note that the last term vanish
if $\rho$ and $\sigma$ are states. If the Jordan algebra is associative
we get Kulback-Leibler divergence given by 
\[
D\left(P\Vert Q\right)=\sum p_{i}\ln\frac{p_{i}}{q_{i}}\,.
\]
If the Jordan algebra is a $C^{*}$-algebra $F$ is minus the von
Neumann entropy the information divergence equals quantum information
divergence (quantum relative entropy) given by
\[
D\left(\rho\Vert\sigma\right)=\mathrm{tr}\left(\rho\left(\ln\rho-\ln\sigma\right)\right)\,.
\]
\end{example}
There are a number of conditions that some regret functions and Bregman
divergences may have.
\begin{definition}
The Bregman divergence $D_{F}$ is \emph{monotone} if $D_{F}\left(\Phi\left(\rho\right),\Phi\left(\sigma\right)\right)\leq D_{F}\left(\rho,\sigma\right)$
for any affinity $\Phi:\Omega\to\Omega$.
\end{definition}
We note that monotonicity is associated with the decrease of free
energy for a closed thermodynamic system. It is possible to define
the regret $D_{F}\left(\rho,\sigma\right)$ even if the function $F$
is not differentiable, but if such a regret function is monotone then
$F$ is automatically differentiable \cite{Harremoes2017}. In the
rest of this paper we shall focus entirely on the case when $F$ is
differentiable and the regret between states is given by the Bregman
divergence.
\begin{theorem}
Information divergence is monotone on special Jordan algebras.
\end{theorem}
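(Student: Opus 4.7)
The plan is to reduce monotonicity on a special Jordan algebra $\mathcal{J}$ to the Lindblad--Uhlmann monotonicity of quantum relative entropy on $M_n(\mathbb{C})$, exploiting the representation result recalled in Section~\ref{sec:Jordan-algebras}: every special Jordan algebra appears as a section of $M_n(\mathbb{C})$ for some $n$. Fix such a section $\iota : \mathcal{J}\hookrightarrow M_n(\mathbb{C})$ with retraction $R : M_n(\mathbb{C})\to\mathcal{J}$, so that $R\circ\iota$ is the identity on $\mathcal{J}$.

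First I would verify that the intrinsic information divergence $D(\rho\Vert\sigma)=\langle\rho,\ln\rho-\ln\sigma\rangle$ defined in the preceding example coincides with the ambient quantum relative entropy applied to $\iota(\rho)$ and $\iota(\sigma)$. Because $\iota$ is a Jordan embedding, spectral projections in $\mathcal{J}$ remain orthogonal idempotents in $M_n(\mathbb{C})$, so functional calculus commutes with $\iota$ and $\iota(\ln\rho)=\ln\iota(\rho)$. The trace on $\mathcal{J}$ is, up to an inessential normalization, the restriction of the matrix trace, and the Jordan inner product is inherited from the Hilbert--Schmidt inner product, so the intrinsic and ambient divergences agree on $\iota(\mathcal{J})$.

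Next, given an affinity $\Phi:\Omega\to\Omega$, the plan is to lift it to $\tilde\Phi=\iota\circ\Phi\circ R$ on $M_n(\mathbb{C})$. Since $R\circ\iota$ is the identity, this extension restricts to $\Phi$ on $\iota(\mathcal{J})$, so $\tilde\Phi(\iota(\rho))=\iota(\Phi(\rho))$ for every state $\rho\in\Omega$. Invoking monotonicity of quantum relative entropy under completely positive trace-preserving maps on $M_n(\mathbb{C})$ then gives $D(\tilde\Phi\iota(\rho)\Vert\tilde\Phi\iota(\sigma))\leq D(\iota(\rho)\Vert\iota(\sigma))$, which by the previous step is precisely $D(\Phi(\rho)\Vert\Phi(\sigma))\leq D(\rho\Vert\sigma)$.

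The main obstacle is to choose the section and retraction so that $\tilde\Phi$ is actually completely positive and trace-preserving on $M_n(\mathbb{C})$. For each simple special Jordan algebra in the classification---$M_n(\mathbb{R})$, $M_n(\mathbb{H})$, the spin factors, and $M_n(\mathbb{C})$ itself---there is a natural positive unital retraction realized as averaging over a compact group of $\ast$-automorphisms of $M_n(\mathbb{C})$ that fixes $\iota(\mathcal{J})$ pointwise: complex conjugation for $M_n(\mathbb{R})$, the symplectic involution for $M_n(\mathbb{H})$, and the corresponding Clifford group action for the spin factors. Such group averages are automatically completely positive and unital, and extending to direct sums of simple factors is routine. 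The technical heart of the argument is therefore this uniform construction of a good retraction and the verification that the trace normalizations match across $\iota$.
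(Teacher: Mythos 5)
Your overall reduction --- embed the special Jordan algebra as a section of $M_n(\mathbb{C})$, check that the intrinsic divergence agrees with the ambient quantum relative entropy, lift the affinity $\Phi$ to $\iota\circ\Phi\circ R$, and invoke monotonicity in the matrix algebra --- is exactly the paper's strategy. However, the step you yourself identify as the ``technical heart'' is a genuine gap, and it cannot be closed the way you propose. First, the retraction onto the real (or quaternionic) subalgebra is not completely positive: complex conjugation is conjugate-linear, hence not a $\ast$-automorphism of $M_n(\mathbb{C})$, and the conditional expectation $X\mapsto\frac{1}{2}\left(X+X^{T}\right)$ onto real symmetric matrices has a Choi matrix that is negative on the antisymmetric subspace (it is the average of the identity and the transpose map). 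Second, even if you had a completely positive retraction, the composite $\iota\circ\Phi\circ R$ contains an \emph{arbitrary} affinity $\Phi$ of the Jordan state space, and such maps need not be implementable as completely positive maps on $M_n(\mathbb{C})$ at all; already for $\Phi=\mathrm{id}$ on $M_n(\mathbb{R})$ the lift is the non-CP conditional expectation just described. So no choice of section and retraction will make the lifted map CPTP in general, and Lindblad--Uhlmann monotonicity is the wrong tool.

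The paper sidesteps this entirely: the lifted map $S\circ\Phi\circ R$ is automatically a \emph{positive} trace-preserving affinity (it maps states to states), and the paper invokes the theorem of M\"uller-Hermes and Reeb \cite{Mueller-Hermes2017} that quantum relative entropy is monotone under all positive trace-preserving maps, with no complete positivity hypothesis. Replacing your appeal to CPTP monotonicity by this stronger monotonicity result repairs the argument and makes your ``main obstacle'' disappear; the remaining compatibility check in your first paragraph (that the intrinsic and ambient divergences agree through the section) is fine and is implicitly used by the paper as well.
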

\begin{svmultproof}
Let $\Omega$ denote the state space of a special Jordan algebra.
Then there exists a section $S:\Omega\to M_{n}\left(\mathbb{C}\right)_{+}^{1}$
with a corresponding retraction $R:M_{n}\left(\mathbb{C}\right)_{+}^{1}\to\Omega.$
Let $\Phi:\Omega\to\Omega$ denote some affinity. Then $S\circ\Phi\circ R$
is an affinity $M_{n}\left(\mathbb{C}\right)_{+}^{1}\to M_{n}\left(\mathbb{C}\right)_{+}^{1}$.
Then 
\begin{multline*}
D\left(\left.\Phi\left(\rho\right)\right\Vert \Phi\left(\sigma\right)\right)
=D\left(\left.S\left(\Phi\left(\rho\right)\right)\right\Vert S\left(\Phi\left(\sigma\right)\right)\right)\\
=D\left(\left.\left(S\circ\Phi\circ R\right)\left(S\left(\rho\right)\right)\right\Vert \left(S\circ\Phi\circ R\right)\left(S\left(\sigma\right)\right)\right)\\
\leq D\left(\left.S\left(\rho\right)\right\Vert S\left(\sigma\right)\right)
=D\left(\left.\rho\right\Vert \sigma\right)\,.
\end{multline*}
Here we have used that information divergence is monotone on $M_{n}\left(\mathbb{C}\right)_{+}^{1}$
\cite{Mueller-Hermes2017}.
\end{svmultproof}

It is not known if information divergence is monotone on the exceptional
Jordan algebra. Let $\rho_{\theta}$ denote a family of states and
let $\Phi$ denote an affinity $\Phi:\Omega\to\Omega$. Then $\Phi$
is said to be \emph{sufficient} for $\rho_{\theta}$ if there exists
a \emph{recovery map} $\Psi:\Omega\to\Omega$, i.e. an affinity such
that $\Psi\left(\Phi\left(\rho_{\theta}\right)\right)=\rho_{\theta}$. 
\begin{definition}
A Bregman divergence $D_{F}$ is said to satisfy \emph{sufficiency}
if $D_{F}\left(\Phi\left(\rho\right),\Phi\left(\sigma\right)\right)=D_{F}\left(\rho,\sigma\right)$
whenever $\Phi$ is sufficient for $\rho,\sigma.$
\end{definition}
It is easy to prove that monotonicity implies sufficiency. Further
it is easy to prove that sufficiency implies the property called statistical
locality as defined below.
\begin{definition}
A Bregman divergence $D_{F}$ satisfies \emph{statistical locality}
if $\rho\bot\sigma_{i}$ implies 
\[
D_{F}\left(\rho,\left(1-t\right)\cdot\rho+t\cdot\sigma_{1}\right)=D_{F}\left(\rho,\left(1-t\right)\cdot\rho+t\cdot\sigma_{2}\right).
\]
\end{definition}
\begin{proposition}
In an Euclidean Jordan algebra Information divergence satisfies statistical locality.
\end{proposition}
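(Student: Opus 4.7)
The plan is to reduce the statement to spectral calculus by exploiting the orthogonality of supports in the Jordan algebra. First I would translate the notion of mutual singularity from Section~\ref{sec:State-spaces} into Jordan-algebra language. Since the positive cone of a Euclidean Jordan algebra is self-dual, a test $\phi$ with $\phi(\rho)=0$ and $\phi(\sigma_i)=1$ corresponds to an element of the cone that dominates the support idempotent $s(\sigma_i)$ and is orthogonal to $s(\rho)$, forcing $s(\rho)\bullet s(\sigma_i)=0$. Consequently the spectral projections in $\rho=\sum_k p_k E_k$ and $\sigma_i=\sum_\ell q_{i,\ell}\, F_{i,\ell}$ satisfy $E_k\bullet F_{i,\ell}=0$ for all indices, and $\{E_k\}\cup\{F_{i,\ell}\}$ is a collectively orthogonal family of idempotents.

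Next I would combine the two decompositions. The element $\tau_i=(1-t)\cdot\rho+t\cdot\sigma_i$ then has the spectral decomposition
\[
\tau_i=\sum_k(1-t)p_k\,E_k+\sum_\ell tq_{i,\ell}\,F_{i,\ell}
\]
on this orthogonal family, and the functional calculus recalled in Section~\ref{sec:Jordan-algebras} gives
\[
\ln\tau_i=\sum_k\ln((1-t)p_k)\,E_k+\sum_\ell\ln(tq_{i,\ell})\,F_{i,\ell}.
\]
Because $\rho\bullet F_{i,\ell}=\sum_k p_k(E_k\bullet F_{i,\ell})=0$, the $F_{i,\ell}$-terms are annihilated in the inner product with $\rho$, leaving
\[
\langle\rho,\ln\tau_i\rangle=\sum_k p_k\ln((1-t)p_k)\,\mathrm{tr}(E_k),
\]
which depends only on $\rho$ and $t$, not on $\sigma_i$.

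The final step is to substitute into the closed form (\ref{eq:KLformel}). Writing
\[
D(\rho\Vert\tau_i)=\langle\rho,\ln\rho\rangle-\langle\rho,\ln\tau_i\rangle-\mathrm{tr}(\rho-\tau_i),
\]
the first two terms are manifestly independent of $\sigma_i$ by the previous paragraph, while $\mathrm{tr}(\rho-\tau_i)=t(\mathrm{tr}(\rho)-\mathrm{tr}(\sigma_i))$ depends on $\sigma_i$ only through $\mathrm{tr}(\sigma_i)=1$. Hence $D(\rho\Vert(1-t)\cdot\rho+t\cdot\sigma_1)=D(\rho\Vert(1-t)\cdot\rho+t\cdot\sigma_2)$, which is statistical locality.

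The main obstacle is the very first step, namely passing from the observer-theoretic singularity $\rho\bot\sigma_i$ (existence of a separating test) to the algebraic condition $s(\rho)\bullet s(\sigma_i)=0$. For classical and complex Hermitean algebras this is routine, but in a general Euclidean Jordan algebra it requires invoking self-duality of the cone together with the spectral theorem to realise the separating test as a sum of rank-one idempotents orthogonal to $s(\rho)$. Once this identification is in place, the spectral computation is essentially mechanical and the trace correction introduces no $\sigma_i$-dependence because both $\sigma_1$ and $\sigma_2$ have unit trace.
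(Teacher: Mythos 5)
Your proposal is correct and follows essentially the same route as the paper: the paper's proof also evaluates $D\left(\rho\Vert\left(1-t\right)\cdot\rho+t\cdot\sigma_{i}\right)$ via the closed form and uses orthogonality so that $\left\langle \rho,\ln\left(\left(1-t\right)\cdot\rho+t\cdot\sigma_{i}\right)\right\rangle =\left\langle \rho,\ln\left(\left(1-t\right)\cdot\rho\right)\right\rangle$, giving the value $-\ln\left(1-t\right)$ independent of $\sigma_{i}$. The only difference is that you make explicit the step the paper leaves implicit, namely translating the test-based singularity $\rho\bot\sigma_{i}$ into operator orthogonality of supports and spectral idempotents, which is a sound (and welcome) elaboration rather than a different argument.
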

\begin{svmultproof}
Assume that $\rho,\,\sigma_{1},$ and $\sigma_{2}$ are states and
that $\rho\bot\sigma_{i}$. Then 
\begin{dmath*}
D\left(\rho\left\Vert \left(1-t\right)\cdot\rho+t\cdot\sigma_{1}\right.\right)
=\left\langle \rho,\ln\left(\rho\right)-\ln\left(\left(1-t\right)\cdot\rho+t\cdot\sigma_{1}\right)\right\rangle 
=\left\langle \rho,\ln\left(\rho\right)-\ln\left(\left(1-t\right)\cdot\rho\right)\right\rangle =\textrm{-}\ln\left(1-t\right).
\end{dmath*}
\end{svmultproof}

\begin{theorem}
\label{thm:local}If the state space $\Omega$ can be represented
as the state space of a Jordan algebra of rank at least 3 then a statistically
local Bregman divergence $D_{F}$ is proportional to information divergence
given by Equation (\ref{eq:KLformel}). There exists a constant $c>0$
such that the function $F$ equals $c\cdot\left\langle \rho,\ln\rho\right\rangle $
plus an affine function on $\Omega$. 
\end{theorem}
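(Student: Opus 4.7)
The plan is to first reduce to the classical (associative) case inside a single maximal associative subalgebra, then extend to the full Jordan algebra by spectrality.

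\emph{Step 1: Reduction to a classical simplex.} Fix a Jordan frame $\left(E_{1},\dots,E_{r}\right)$ of the algebra, with $r\geq 3$. The associative subalgebra generated by the $E_{i}$ has state space a classical simplex $\Delta^{r-1}$, and the restriction $F|_{\Delta^{r-1}}$ is still convex and differentiable. Its Bregman divergence inherits statistical locality, because orthogonality in the Jordan sense restricts on $\Delta^{r-1}$ to disjointness of support of classical probability vectors.

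\emph{Step 2: Classical functional equation.} On $\Delta^{r-1}$ with $r\geq 3$ there is genuine room for the condition: taking $\rho=E_{1}$ and letting $\sigma_{1},\sigma_{2}$ be any two mixtures supported on $\left\{ E_{2},\dots,E_{r}\right\} $, statistical locality forces $D_{F}\left(E_{1},(1-t)E_{1}+t\sigma\right)$ to depend on $\sigma$ only through $t$. Differentiating this identity in $t$ at $t=0^{+}$ gives a linear constraint on the components of $\nabla F$ at points near $E_{1}$; varying the distinguished pure state over the vertices and iterating yields that $F$ must be symmetric in its arguments and separable, $F(p_{1},\dots,p_{r})=\sum_{i}g(p_{i})+\text{affine}$, for a convex differentiable $g$. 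The rank-$\geq 3$ hypothesis is essential here, since in a rank-$2$ classical system there is only one state orthogonal to a given pure state and statistical locality is vacuous.

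\emph{Step 3: Identifying $g$.} Applying statistical locality inside a sub-simplex of dimension $2$ (states supported on three coordinates) and combining with the separable form of $F$ leaves a one-variable functional equation on $g$. The only solutions compatible with convexity and differentiability are $g(x)=c\,x\ln x + \alpha x + \beta$ with $c>0$, giving $F|_{\Delta^{r-1}}(p)=c\sum_{i}p_{i}\ln p_{i}+\text{affine}$.

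\emph{Step 4: Extension to the whole Jordan algebra.} For a general state $\rho$, spectrality provides a decomposition $\rho=\sum_{i}\lambda_{i}E_{i}'$ that places $\rho$ inside the classical simplex of some frame, on which Steps 1--3 determine $F$. Consistency of the constant $c$ across different frames follows from the fact that any two frames share at least one common minimal idempotent (or are connected through such a chain), and $F$ must take the same value at a shared pure state under both parametrizations. One then concludes $F(\rho)=c\langle\rho,\ln\rho\rangle+\text{affine}$, which rewrites the information divergence formula \eqref{eq:KLformel} up to the affine term. The main obstacle is the classical step: deriving separability and then the logarithmic form from the bare statistical-locality functional equation -- weaker than sufficiency or monotonicity -- requires careful manipulation of $\nabla F$ on the simplex, and this is where the $r\geq 3$ assumption has to be exploited. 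Verifying the uniform value of $c$ across frames is a secondary but more routine point that relies on spectrality rather than on further convex-analytic work.
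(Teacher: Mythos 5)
There is a genuine gap at the heart of your classical step (Step 2). For any convex differentiable $F$ the function $t\mapsto D_{F}\left(\rho,(1-t)\rho+t\sigma\right)$ has vanishing first derivative at $t=0$, because $\sigma'\mapsto D_{F}(\rho,\sigma')$ is minimized at $\sigma'=\rho$; a direct computation gives $\frac{\mathrm{d}}{\mathrm{d}t}D_{F}\left(\rho,(1-t)\rho+t\sigma\right)=t\left\langle \nabla^{2}F\left((1-t)\rho+t\sigma\right)(\sigma-\rho),\sigma-\rho\right\rangle$, which is $0$ at $t=0$ no matter what $F$ is. So ``differentiating the identity at $t=0^{+}$'' yields no constraint on $\nabla F$ whatsoever; the content of statistical locality only appears at second and higher order, as constraints on the Hessian of $F$, and the passage from those constraints to separability $F(p)=\sum_{i}g(p_{i})+\textrm{affine}$ and then to $g(x)=c\,x\ln x$ is exactly the nontrivial part of the classical theorem. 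In your sketch this is asserted rather than derived. (The paper itself does not redo this work: its proof of Theorem \ref{thm:local} consists of citing \cite{Harremoes2017}, where the statement is proved for finite $C^{*}$-algebras, and remarking that the argument carries over to Jordan algebras. If you want a self-contained proof you must actually solve this functional equation; if you are content to quote the classical result, say so explicitly.)

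Step 4 also has two concrete holes. First, matching the \emph{value} of $F$ at a single pure state shared by two frames cannot fix the constant: on each frame simplex you only know $F=c_{\mathcal{E}}\sum p_{i}\ln p_{i}+A_{\mathcal{E}}$ with a frame-dependent affine term $A_{\mathcal{E}}$, and an affine term can reproduce any prescribed value at one point regardless of $c_{\mathcal{E}}$. To compare constants you need agreement of strictly convex (second-order) data on a set shared by both simplices --- for instance the Hessian at the maximally mixed state, which lies in every frame simplex, or the segment from a shared minimal idempotent $E$ to $(1-E)/(r-1)$, which lies in both simplices when the frames share $E$ --- together with connectivity of the set of frames by chains sharing an idempotent. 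Second, even with a single $c$, knowing $F$ on every frame simplex does not determine $\nabla F(\sigma)$ in directions transverse to $\sigma$'s frame, hence does not determine $D_{F}(\rho,\sigma)$ for $\rho,\sigma$ not lying in a common frame simplex, and it does not show that the frame-wise affine remainders $A_{\mathcal{E}}$ glue into one global affine function on $\Omega$. That gluing needs an additional argument (a Gleason-type frame-function argument for rank at least $3$, or a further application of statistical locality to non-commuting configurations); as written, ``one then concludes $F(\rho)=c\langle\rho,\ln\rho\rangle+\textrm{affine}$'' skips precisely the step that turns the spectral, frame-by-frame information into the global statement of the theorem.
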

\begin{svmultproof}
The theorem was proved for finite $C^{*}$-algebras in \cite{Harremoes2017},
but the proof is the same for more general Jordan algebras. 
\end{svmultproof}

The theorem implies under certain conditions the following conditions
are equivalent
\begin{itemize}
\item Monotonicity, 
\item Sufficiency
\item Statistical locality
\item The Bregman divergence is proportional to information divergence.
\item The objective function $F$ is proportional to entropy plus an affine
function.
\end{itemize}
If the state space has rank 2 these conditions are not equivalent
and this special case was studied in great detail in \cite{Harremoes2017c}. 

\section{Information causality\label{sec:Information-causality}}

Consider a bipartite system with Alice and Bob as observers. We assume
the no-signaling condition and local tomography are fulfilled so that
a joint state can be described as an element in the tensor product
of local vector spaces. Let $U_{A}$ and $U_{b}$ denote order units
of Alice and Bob.

Let $F$ denote some payoff function on a joint system with regret
function $D_{F}.$ We will assume that the regret function $D_{F}$
satisfies monotonicity. Then $F$ is differentiable and $D_{F}$ is
a Bregman divergence. Therefore $D_{F}$ is given by 
\[
D_{F}\left(\rho,\sigma\right)=F\left(\rho\right)-\left(F\left(\sigma\right)+\left\langle \left.\nabla F\left(\sigma\right)\right|\rho-\sigma\right\rangle \right)\,.
\]
The following proposition is well-known if the affine
combination is a convex combination.
\begin{proposition}
If $\sum_{i}t_{i}=1$ and the affine combination $\bar{\rho}=\sum_{i}t_{i}\cdot\rho_{i}$
is a state then the Bregman identity holds:
\begin{equation}
\sum_{i}t_{i}\cdot D_{F}\left(\rho_{i},\sigma\right)=\sum_{i}t_{i}\cdot D_{F}\left(\rho_{i},\bar{\rho}\right)+D_{F}\left(\bar{\rho},\sigma\right).\label{eq:BregmanIdentity}
\end{equation}
 
\end{proposition}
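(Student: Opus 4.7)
The plan is to prove the identity by direct expansion using the gradient formula for the Bregman divergence, keeping careful track of which terms survive once the weights $t_i$ are summed. The key inputs are (a) differentiability of $F$, so that $D_F(\rho,\sigma)=F(\rho)-F(\sigma)-\langle\nabla F(\sigma)\mid\rho-\sigma\rangle$ makes sense at both reference points $\sigma$ and $\bar{\rho}$; (b) the normalization $\sum_i t_i=1$; and (c) the hypothesis that $\bar{\rho}=\sum_i t_i\cdot\rho_i$ is itself a state, so $\nabla F(\bar{\rho})$ is well defined. Note that the standard Bregman identity is usually stated for convex combinations, but the hypothesis here only requires an affine combination, so a tiny bit of care is needed to verify that positivity of the $t_i$ is never used.

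First I would compute the left-hand side. Applying the gradient formula and pulling the weights inside,
\begin{equation*}
\sum_i t_i\cdot D_F(\rho_i,\sigma)=\sum_i t_i F(\rho_i)-\Bigl(\sum_i t_i\Bigr)F(\sigma)-\Bigl\langle\nabla F(\sigma)\,\Big|\,\sum_i t_i\rho_i-\Bigl(\sum_i t_i\Bigr)\sigma\Bigr\rangle.
\end{equation*}
Using $\sum_i t_i=1$ and $\sum_i t_i\rho_i=\bar{\rho}$, the right-hand side collapses to $\sum_i t_i F(\rho_i)-F(\sigma)-\langle\nabla F(\sigma)\mid\bar{\rho}-\sigma\rangle$.

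Next I would do the same computation for the first sum on the right-hand side, this time with reference point $\bar{\rho}$. The inner-product term becomes $\langle\nabla F(\bar{\rho})\mid\bar{\rho}-\bar{\rho}\rangle=0$, so
\begin{equation*}
\sum_i t_i\cdot D_F(\rho_i,\bar{\rho})=\sum_i t_i F(\rho_i)-F(\bar{\rho}).
\end{equation*}
Adding the remaining term $D_F(\bar{\rho},\sigma)=F(\bar{\rho})-F(\sigma)-\langle\nabla F(\sigma)\mid\bar{\rho}-\sigma\rangle$ makes the $F(\bar{\rho})$ contributions cancel, leaving exactly the expression obtained for the left-hand side.

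There is really no hard step here; the whole content of the proposition is that the only place the reference point $\sigma$ enters in $\sum_i t_i D_F(\rho_i,\sigma)$ is through a linear functional of $\sum_i t_i\rho_i=\bar{\rho}$, which is why the $\sigma$-dependence can be factored into a single divergence $D_F(\bar{\rho},\sigma)$. The only point worth flagging is that the generalization from convex to affine combinations costs nothing: positivity of the $t_i$ is used neither in the cancellation of $\sum_i t_i=1$ nor in the vanishing of $\langle\nabla F(\bar{\rho})\mid\bar{\rho}-\bar{\rho}\rangle$; the hypothesis that $\bar{\rho}$ is a state is used only to guarantee that it lies in the domain of $\nabla F$.
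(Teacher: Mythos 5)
Your proof is correct and follows essentially the same route as the paper: direct expansion of the divergences via the gradient formula, using $\sum_i t_i=1$, $\sum_i t_i\cdot\rho_i=\bar{\rho}$, and the vanishing of $\left\langle \left.\nabla F\left(\bar{\rho}\right)\right|\bar{\rho}-\bar{\rho}\right\rangle$ so that the $F\left(\bar{\rho}\right)$ terms cancel. The only cosmetic difference is that you expand both sides and meet in the middle, whereas the paper rewrites the right-hand side until it equals the left-hand side.
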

\begin{svmultproof}
We expand the right hand side of (\ref{eq:BregmanIdentity}) and get
\begin{dmath*}
\sum_{i}t_{i}\cdot D_{F}\left(\rho_{i},\bar{\rho}\right)+D_{F}\left(\bar{\rho},\sigma\right)\\
=\sum_{i}t_{i}\cdot\left(F\left(\rho_{i}\right)-\left(F\left(\bar{\rho}\right)+\left\langle \left.\nabla F\left(\bar{\rho}\right)\right|\rho_{i}-\bar{\rho}\right\rangle \right)\right)\\
+F\left(\bar{\rho}\right)-\left(F\left(\sigma\right)+\left\langle \left.\nabla F\left(\sigma\right)\right|\bar{\rho}-\sigma\right\rangle \right)\,.
\end{dmath*}
We can re-arrange the terms and use that
\[
\bar{\rho}=\sum_{i}t_{i}\cdot\rho_{i}
\]
 to get
\begin{dmath*}
\sum_{i}t_{i}\cdot F\left(\rho_{i}\right)
-\left(\sum_{i}t_{i}\cdot F\left(\bar{\rho}\right)+\left\langle \left.\nabla F\left(\bar{\rho}\right)\right|\sum_{i}t_{i}\cdot\rho_{i}-\bar{\rho}\right\rangle \right)\\
+F\left(\bar{\rho}\right)-\left(F\left(\sigma\right)+\left\langle \left.\nabla F\left(\sigma\right)\right|\bar{\rho}-\sigma\right\rangle \right)\\
=\sum_{i}t_{i}\cdot F\left(\rho_{i}\right)-\left(F\left(\bar{\rho}\right)+\left\langle \left.\nabla F\left(\bar{\rho}\right)\right|\bar{\rho}-\bar{\rho}\right\rangle \right)\\
+F\left(\bar{\rho}\right)-\left(F\left(\sigma\right)+\left\langle \left.\nabla F\left(\sigma\right)\right|\bar{\rho}-\sigma\right\rangle \right)\,.
\end{dmath*}
Therefore the right hand side of Equation (\ref{eq:BregmanIdentity})
reduces to
\begin{dmath*}
\sum_{i}t_{i}\cdot F\left(\rho_{i}\right)-\left(F\left(\sigma\right)+\left\langle \left.\nabla F\left(\sigma\right)\right|\bar{\rho}-\sigma\right\rangle \right)\\
=\sum_{i}t_{i}\cdot\left(F\left(\rho_{i}\right)-\left(F\left(\sigma\right)+\left\langle \left.\nabla F\left(\sigma\right)\right|\rho_{i}-\sigma\right\rangle \right)\right)\\
=\sum_{i}t_{i}\cdot D_{F}\left(\rho_{i},\sigma\right),
\end{dmath*}
which is the left hand side of Equation (\ref{eq:BregmanIdentity})
and this completes the proof.
\end{svmultproof}

\begin{theorem}
Assume that $\Omega\subset V_{A}\otimes V_{B}$. If $\rho_{1},\rho_{2}\in\Omega_{A}$
and $\sigma_{1},\sigma_{2}\in\Omega_{B}$ and $D_{F}$ satisfies sufficiency
then
\[
D_{F}\left(\rho_{1}\otimes\sigma_{1},\rho_{2}\otimes\sigma_{1}\right)=D_{F}\left(\rho_{2}\otimes\sigma_{2},\rho_{2}\otimes\sigma_{2}\right)\,.
\]
\end{theorem}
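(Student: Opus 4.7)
The displayed statement has identical arguments on the right-hand side, which would force the divergence to vanish identically; I take this to be a typographical slip and aim to prove the natural identity
\[
D_{F}(\rho_{1}\otimes\sigma_{1},\rho_{2}\otimes\sigma_{1}) \;=\; D_{F}(\rho_{1}\otimes\sigma_{2},\rho_{2}\otimes\sigma_{2}),
\]
that is, the divergence between two product states with a common B-factor does not depend on that B-factor. This is exactly the invariance one needs to define conditional divergence unambiguously in Section \ref{sec:Information-causality}.

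The plan is to exhibit a single affinity $\Phi:\Omega\to\Omega$ that sends the pair $\{\rho_{1}\otimes\sigma_{1},\,\rho_{2}\otimes\sigma_{1}\}$ to the pair $\{\rho_{1}\otimes\sigma_{2},\,\rho_{2}\otimes\sigma_{2}\}$ and is sufficient for this pair in the sense of Section \ref{sec:Bregman-divergences}. Take
\[
\Phi(\omega) \;=\; \mathbb{E}_{A}(\omega)\otimes\sigma_{2},
\]
where $\mathbb{E}_{A}:\Omega\to\Omega_{A}$ is the marginal affinity from Section \ref{sec:State-spaces}. The output lies in $\Omega$ because the standing inclusion $\Omega_{A}\otimes_{min}\Omega_{B}\subseteq\Omega$ makes $\alpha\otimes\sigma_{2}$ a bona fide state for every $\alpha\in\Omega_{A}$, and the composition of affinities is affine. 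As a recovery map take
\[
\Psi(\omega) \;=\; \mathbb{E}_{A}(\omega)\otimes\sigma_{1}.
\]
Since $\mathbb{E}_{A}(\rho_{i}\otimes\sigma_{j})=\rho_{i}$ for all $i,j\in\{1,2\}$, the composition $\Psi\circ\Phi$ fixes both $\rho_{1}\otimes\sigma_{1}$ and $\rho_{2}\otimes\sigma_{1}$, so $\Phi$ is sufficient for this pair by definition.

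Applying the sufficiency hypothesis on $D_{F}$ then yields
\[
D_{F}(\rho_{1}\otimes\sigma_{1},\rho_{2}\otimes\sigma_{1})
= D_{F}\bigl(\Phi(\rho_{1}\otimes\sigma_{1}),\,\Phi(\rho_{2}\otimes\sigma_{1})\bigr)
= D_{F}(\rho_{1}\otimes\sigma_{2},\rho_{2}\otimes\sigma_{2}),
\]
which is the desired identity. The only real obstacle is verifying that $\Phi$ and $\Psi$ are genuine affinities on the \emph{joint} state space $\Omega$, not merely on the separable cone. This uses no-signaling to produce the marginal map $\mathbb{E}_{A}$, local tomography to identify joint states with elements of $V_{A}\otimes V_{B}$, and the separability inclusion $\Omega_{A}\otimes_{min}\Omega_{B}\subseteq\Omega$ to guarantee that the re-tensored output lies in $\Omega$; all three are standing hypotheses of the section, so no further input is required.
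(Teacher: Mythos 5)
Your proof is correct and follows essentially the same route as the paper: you construct the maps $\pi\mapsto\mathbb{E}_{A}(\pi)\otimes\sigma_{1}$ and $\pi\mapsto\mathbb{E}_{A}(\pi)\otimes\sigma_{2}$, observe that they recover each other on the pair in question, and invoke sufficiency of $D_{F}$, exactly as the paper does (you also correctly read the right-hand side of the stated identity as a typo for $D_{F}(\rho_{1}\otimes\sigma_{2},\rho_{2}\otimes\sigma_{2})$). Your added verification that these maps are genuine affinities $\Omega\to\Omega$ via $\Omega_{A}\otimes_{min}\Omega_{B}\subseteq\Omega$ is a detail the paper leaves implicit, but it is the same argument.
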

\begin{svmultproof}
To see this define
\begin{align*}
\Phi\left(\pi\right) & =\mathbb{E}_{A}\left(\pi\right)\otimes\sigma_{1}\,,\\
\Psi\left(\pi\right) & =\mathbb{E}_{A}\left(\pi\right)\otimes\sigma_{2}\,.
\end{align*}
Then
\begin{align*}
\Phi\left(\rho_{i}\otimes\sigma_{2}\right) & =\rho_{i}\otimes\sigma_{1}\,,\\
\Psi\left(\rho_{i}\otimes\sigma_{1}\right) & =\rho_{i}\otimes\sigma_{2}\,.
\end{align*}
The result is obtained by sufficiency of $D_{F}$. 
\end{svmultproof}

If $\rho_{1},\rho_{2}\in\Omega_{A}$ we may write $D_{F}\left(\rho_{1},\rho_{2}\right)$
as an abbreviation for $D_{F}\left(\rho_{2}\otimes\sigma,\rho_{2}\otimes\sigma\right)$
where some arbitrary state $\sigma\in\Omega_{B}$ is used. 
\begin{definition}
Let $\sigma$ denote a state on a system with a bipartite subsystem
composed of subsystems labeled $A$ and $B$. Then the mutual information
between the subsystem $A$ and subsystem $B$ is defined as
\begin{equation}
I_{\sigma}\left(A;B\right)=D_{F}\left(\sigma_{AB},\sigma_{A}\otimes\sigma_{B}\right).\label{eq:mutual}
\end{equation}
\end{definition}
\begin{theorem}
If the Bregman divergence $D_{F}$ is monotone then mutual information
satisfies the following two conditions.

\textbf{Consistency} If the system has a bipartite subsystem consisting
of two classical subsystems $A$ and $B$ then the mutual information
restricted to the bipartite subsystem is proportional to classical
mutual information.

\textbf{Data processing inequality} If $\Phi:V_{B}\to V_{B}$
is a positive trace conserving affinity then
\[
I_{\sigma}\left(A;B\right)\geq I_{\left(id\otimes\Phi\right)\left(\sigma\right)}\left(A;B\right).
\]
\end{theorem}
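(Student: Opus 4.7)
The plan is to treat the two claims separately; each follows quickly from properties already established for $D_{F}$. The data processing inequality will reduce to a single application of monotonicity once I identify the marginals of $(\mathrm{id}\otimes\Phi)(\sigma)$, and consistency will reduce to the rigidity result of Theorem~\ref{thm:local} applied to the classical sub-simplex.

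For the data processing inequality, I write $\tau=(\mathrm{id}\otimes\Phi)(\sigma)$ and first verify that taking marginals commutes with the local action of $\Phi$: because $\Phi$ is positive and trace preserving, one has $\tau_{A}=\sigma_{A}$ and $\tau_{B}=\Phi(\sigma_{B})$, so that $\tau_{A}\otimes\tau_{B}=(\mathrm{id}\otimes\Phi)(\sigma_{A}\otimes\sigma_{B})$. Substituting into the definition (\ref{eq:mutual}) and invoking monotonicity of $D_{F}$ under the affinity $\mathrm{id}\otimes\Phi\colon\Omega\to\Omega$ yields
\[
I_{\tau}(A;B)=D_{F}\bigl((\mathrm{id}\otimes\Phi)(\sigma_{AB}),\,(\mathrm{id}\otimes\Phi)(\sigma_{A}\otimes\sigma_{B})\bigr)\le D_{F}(\sigma_{AB},\sigma_{A}\otimes\sigma_{B})=I_{\sigma}(A;B).
\]

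For consistency, I restrict $D_{F}$ to the joint state space of the two classical subsystems, which is the simplex of joint probability distributions on the outcome set of $A\times B$. The restricted divergence inherits monotonicity and is therefore statistically local on a simplex of rank $|A|\cdot|B|$; whenever both $A$ and $B$ have at least two outcomes this rank is at least $4\ge 3$, so Theorem~\ref{thm:local} forces the restricted $D_{F}$ to be a positive multiple of information divergence. Feeding this back into (\ref{eq:mutual}) identifies $I_{\sigma}(A;B)$ with the same multiple of the classical Shannon mutual information of the joint distribution.

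The main obstacle I foresee is justifying that $\mathrm{id}\otimes\Phi$ genuinely lifts to an affinity $\Omega\to\Omega$ to which the monotonicity hypothesis can be applied; this is a mild closure property of the joint state space under local positive trace-preserving maps, implicit in the no-signaling and local-tomography assumptions but not written as an explicit axiom at the beginning of the section. A secondary technical point is the rank-at-least-$3$ hypothesis in Theorem~\ref{thm:local}: it is automatic whenever both classical subsystems are nontrivial, and the degenerate cases in which one subsystem has a single outcome are handled immediately, since both sides of the consistency claim vanish there.
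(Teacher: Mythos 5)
Your proposal is correct and takes essentially the same route as the paper: one application of monotonicity under the lifted affinity $\mathrm{id}\otimes\Phi$ (after identifying the marginals of $(\mathrm{id}\otimes\Phi)(\sigma)$ as $\sigma_{A}$ and $\Phi(\sigma_{B})$) for the data processing inequality, and the rank-at-least-3 rigidity of Theorem~\ref{thm:local} on the classical sub-simplex for consistency. If anything, your write-up is slightly cleaner: you state the monotonicity inequality in the correct direction (the paper's displayed chain has the inequality sign reversed, evidently a typo), you make the marginal identification explicit where the paper leaves it implicit, and you note the need for $\mathrm{id}\otimes\Phi$ to be an affinity of the joint state space, which the paper also takes for granted.
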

\begin{svmultproof}
\textbf{Consistency} If the subsystems defined by Alice and Bob are
classical and non-trivial then the rank of their joint state space
is at least $2\times2=4.$ When the rank of the state space is least
3 the function $F$ is a linear function of the Shannon entropy and
therefore the mutual information defined by (\ref{eq:mutual}) is
proportional to the classical mutual information.

\textbf{Data processing inequality} Assume that 
\[
\Phi:V_{B}\to V_{B}
\]
is a positive trace conserving affinity. Then $\tilde{\Phi}=id\otimes\Phi$
is given by $\tilde{\Phi}\left(\sigma_{A}\otimes\sigma_{B}\right)=\sigma\otimes\Phi\left(\sigma_{B}\right)$
and 
\begin{align*}
I_{\sigma}\left(A;B\right) & =D_{F}\left(\sigma_{AB},\sigma_{A}\otimes\sigma_{B}\right)\\
 & \leq D_{F}\left(\tilde{\Phi}\left(\sigma_{AB}\right),\tilde{\Phi}\left(\sigma_{A}\otimes\sigma_{B}\right)\right)\\
 & =D_{F}\left(\tilde{\Phi}\left(\sigma_{AB}\right),\sigma_{A}\otimes\Phi\left(\sigma_{B}\right)\right)\\
 & =I_{\tilde{\Phi}\left(\sigma\right)}\left(A;B\right)\,,
\end{align*}
which completes the proof.
\end{svmultproof}

In probability theory one may define entropy as self information via
\[
H\left(A\right)=I\left(A,A\right).
\]
This is not possible in quantum theory because the different sub-spaces
in a tensor product decomposition have to be distinct. In probability
theory this is not a problem and cloning is allowed i.e. one is allowed
to form identical copies a state. In probability theory one gets
\begin{align*}
H\left(AB\right) & =I\left(AB,AB\right)\\
 & =I\left(A,AB\right)+I\left(B,AB\mid A\right)\\
 & \geq I\left(A,AB\right)\\
 & =I\left(A,A\right)+I\left(A,B\mid A\right)\\
 & \geq I\left(A,A\right)\\
 & =H\left(A\right).
\end{align*}
Therefore in probability theory the entropy of a subsystem is less
than the entropy of the full system.
\begin{definition}
A Bregman divergence $D_{F}$ on a bipartite system is \emph{additive}
if 
\[
D_{F}\left(\rho_{A}\otimes\rho_{B},\sigma_{A}\otimes\sigma_{B}\right)=D_{F}\left(\rho_{A},\sigma_{A}\right)+D_{F}\left(\rho_{B},\sigma_{B}\right).
\]
\end{definition}
\begin{theorem}
If the state spaces $\Omega_{A}$ and $\Omega_{B}$ can be represented
as state spaces of Jordan algebras $\mathcal{J}_{A}$ and $\mathcal{J}_{B}$,
and if $D_{F}$ satisfies sufficiency then $D_{F}$ is additive.
\end{theorem}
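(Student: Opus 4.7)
My plan is to reduce to the chain-rule identity for Bregman divergence,
\[
D_F(x,z) = D_F(x,y) + D_F(y,z) + \langle \nabla F(y) - \nabla F(z),\, x - y \rangle,
\]
and show that the resulting cross term vanishes. Setting $x = \rho_A \otimes \rho_B$, $y = \rho_A \otimes \sigma_B$, and $z = \sigma_A \otimes \sigma_B$, the preceding theorem identifies the first two terms with the marginal divergences $D_F(\rho_B, \sigma_B)$ and $D_F(\rho_A, \sigma_A)$ in the abbreviated notation introduced there. So everything hinges on the cross term
\[
C = \langle \nabla F(\rho_A \otimes \sigma_B) - \nabla F(\sigma_A \otimes \sigma_B),\, \rho_A \otimes (\rho_B - \sigma_B) \rangle.
\]

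For the gradient I would argue as follows. Any affinity $\Phi_A\colon \Omega_A \to \Omega_A$ lifts to $\Phi_A \otimes \mathrm{id}_B$ on the joint state space, and a recovery map for $\Phi_A$ lifts to a recovery map for the lift; hence sufficiency on the joint system forces sufficiency, and therefore statistical locality, of the divergence induced on $\Omega_A$ by the slice $\pi \mapsto \pi \otimes \tau$, and analogously for $\Omega_B$. Theorem~\ref{thm:local}, valid since both factors have rank at least three, identifies each induced divergence with a multiple of information divergence, so $F$ restricted to any product-state slice equals $c\langle\sigma,\ln\sigma\rangle$ modulo an affine function. The tensor-product identity $\ln(\rho_A \otimes \sigma_B) = \ln \rho_A \otimes \mathbf{1} + \mathbf{1} \otimes \ln \sigma_B$ valid in the Jordan-algebra tensor product then yields
\[
\nabla F(\rho_A \otimes \sigma_B) - \nabla F(\sigma_A \otimes \sigma_B) = c\,(\ln \rho_A - \ln \sigma_A) \otimes \mathbf{1}
\]
modulo an affine contribution, and pairing with $\rho_A \otimes (\rho_B - \sigma_B)$ produces the factor $\mathrm{tr}(\rho_B - \sigma_B) = 0$. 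Since affine modifications of $F$ do not affect $D_F$, the cross term $C$ vanishes and additivity follows.

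The main obstacle is the last algebraic step: the identity for $\ln(\rho_A \otimes \sigma_B)$ requires a Jordan structure on the whole joint state space, whereas the hypotheses only supply one on the factors $\mathcal{J}_A$ and $\mathcal{J}_B$. Combined with local tomography from the start of Section~\ref{sec:Information-causality} one has the embedding $\Omega \subset V_A \otimes V_B$, but one still has to check that any bilinear correction between $F$ on product states and the sum of the two slice entropies extends to a linear functional on the whole joint, so that it contributes only as an affine term in $F$ and therefore drops out of the Bregman divergence. This extension step is clean when both Jordan algebras are special, because the ambient complex matrix algebras then provide a canonical joint tensor product; it is in the exceptional case that the proof would need additional work.
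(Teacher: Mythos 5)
Your reduction is sound as far as it goes: the three-point identity you quote is a correct Bregman identity, and the preceding theorem (applied once with a common first factor and once with a common second factor) does identify $D_F(\rho_A\otimes\rho_B,\rho_A\otimes\sigma_B)$ and $D_F(\rho_A\otimes\sigma_B,\sigma_A\otimes\sigma_B)$ with the marginal divergences, so additivity is indeed equivalent to the vanishing of your cross term $C$. The gap is in the argument that $C=0$, and it is the one you half-acknowledge at the end. Sufficiency restricted to the slices $\pi\mapsto\pi\otimes\tau$ only gives you information about $F$ on the product-state manifold, and this controls $\langle\nabla F,\cdot\rangle$ only in directions tangent to that manifold. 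Writing $x-y=\rho_A\otimes(\rho_B-\sigma_B)=\sigma_A\otimes(\rho_B-\sigma_B)+(\rho_A-\sigma_A)\otimes(\rho_B-\sigma_B)$, the pairing $\left\langle \nabla F(\sigma_A\otimes\sigma_B),(\rho_A-\sigma_A)\otimes(\rho_B-\sigma_B)\right\rangle$ is a derivative in a correlation direction transverse to the product manifold, and nothing in the hypotheses supplies a Jordan structure, a logarithm, or a canonical entropy on the joint state space $\Omega$ (only $\Omega_A\otimes_{\min}\Omega_B\subseteq\Omega\subset V_A\otimes V_B$), so the formula $\nabla F(\rho_A\otimes\sigma_B)-\nabla F(\sigma_A\otimes\sigma_B)=c(\ln\rho_A-\ln\sigma_A)\otimes\mathbf{1}$ is not available. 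What slice data actually give is weaker: the generators $\pi_B\mapsto F(\rho_A\otimes\pi_B)$ and $\pi_B\mapsto F(\sigma_A\otimes\pi_B)$ induce the same divergence on $\Omega_B$, hence differ by an \emph{affine} function of $\pi_B$, not a constant, so even the tangential part of $C$ does not obviously vanish. In addition, your appeal to Theorem \ref{thm:local} on each factor needs each factor to have rank at least $3$, which is not assumed (qubits and spin factors have rank $2$, and the paper states explicitly that for rank $2$ sufficiency does not force the entropic form). Finally, the proposed repair in the special case does not close the gap: $F$ is defined on the abstract composite $\Omega$, not on an ambient matrix-algebra composite, and no extension of $F$ is provided there.

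The paper avoids any joint Jordan structure by a different mechanism, and it is exactly the ingredient your cross term is missing. It replaces $F$ by $\tilde F(\sigma)=D_F(\sigma,c_A\otimes c_B)$, where $c_A,c_B$ maximize the fine-grained entropies (an affine change, so $D_{\tilde F}=D_F$). Given $\rho_A,\rho_B$, their spectral decompositions generate associative subalgebras $\mathcal{A}_A\subseteq\mathcal{J}_A$ and $\mathcal{A}_B\subseteq\mathcal{J}_B$; the restriction of $D_F$ to the classical state space of $\mathcal{A}_A\otimes\mathcal{A}_B$ (inside the separable states) still satisfies sufficiency, and since this simplex has rank at least $2\times2=4$, Theorem \ref{thm:local} applies \emph{there}, so the restriction is proportional to classical Kullback--Leibler divergence; classical additivity then yields the product splitting $\tilde F(\rho_A\otimes\rho_B)=\tilde F_A(\rho_A)+\tilde F_B(\rho_B)$ with $\tilde F_A(\cdot)=D_F(\cdot,c_A)$, $\tilde F_B(\cdot)=D_F(\cdot,c_B)$, which is then inserted into the Bregman formula. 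Note how tensoring the two classical subalgebras both sidesteps the joint-logarithm problem and cures the rank-$2$ obstruction. If you want to keep your three-point-identity reduction, you must first establish such a product splitting of (a normalization of) $F$; slice-wise application of Theorem \ref{thm:local} cannot deliver it.
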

\begin{svmultproof}
Let $c_{A}$ and $c_{B}$ denote distributions that maximize the fine
grained entropy distributions in each of the algebras. Then $D_{F}$
equals $D_{\tilde{F}}$ where 
\[
\tilde{F}\left(\sigma\right)=D_{F}\left(\sigma,c_{A}\otimes c_{B}\right).
\]
Let $\rho_{A}$ and $\rho_{B}$ denote states in the state spaces
$\Omega_{A}$ and $\Omega_{B}.$ Then $\rho_{A}$ and $\rho_{B}$
generate associative sub-algebras $\mathcal{A}_{A}\subseteq\mathcal{J}_{A}$
and $\mathcal{A}_{B}\subseteq\mathcal{J}_{B}$ with classical state
spaces. Now the restriction of $D_{F}$ to $\mathcal{A}_{A}\otimes\mathcal{A}_{B}$
satisfies sufficiency and according to Theorem \ref{thm:local} $D_{F}$
is proportional to information divergence. Therefore 
\[
D_{F}\left(\rho_{A}\otimes\rho_{B},c_{A}\otimes c_{B}\right)=D_{F}\left(\rho_{A},c_{A}\right)+D_{F}\left(\rho_{B},c_{B}\right)
\]
because information divergence is additive on classical state spaces.
Define 
\begin{align*}
\tilde{F}_{A}\left(\rho_{A}\right) & =D_{F}\left(\rho_{A},c_{A}\right)\,,\\
\tilde{F}_{B}\left(\rho_{B}\right) & =D_{F}\left(\rho_{B},c_{B}\right)\,.
\end{align*}
With this notation
\[
\tilde{F}\left(\rho_{A}\otimes\rho_{B}\right)=\tilde{F}_{A}\left(\rho_{A}\right)+\tilde{F}_{B}\left(\rho_{B}\right).
\]
Thus 
\begin{multline*}
D_{F}\left(\rho_{A}\otimes\rho_{B},\sigma_{A}\otimes\sigma_{B}\right)=\tilde{F}\left(\rho_{A}\otimes\rho_{B}\right)\\
\,\,\,\,-\left(\begin{array}{c}
\tilde{F}\left(\sigma_{A}\otimes\sigma_{B}\right)+\\
\left\langle \left.\nabla\tilde{F}\left(\sigma_{A}\otimes\sigma_{B}\right)\right|\rho_{A}\otimes\rho_{B}-\sigma_{A}\otimes\sigma_{B}\right\rangle 
\end{array}\right)\\
=\tilde{F}_{A}\left(\rho_{A}\right)+\tilde{F}_{B}\left(\rho_{B}\right)\\
\,\,\,\,-\left(\begin{array}{c}
\tilde{F}_{A}\left(\sigma_{A}\right)+\tilde{F}_{B}\left(\sigma_{B}\right)+\\
\left\langle \left.\nabla\tilde{F}_{A}\left(\sigma_{A}\right)+\nabla\tilde{F}_{B}\left(\sigma_{B}\right)\right|\rho_{A}\otimes\rho_{B}-\sigma_{A}\otimes\sigma_{B}\right\rangle 
\end{array}\right)\\
=\tilde{F}_{A}\left(\rho_{A}\right)-\left(\begin{array}{c}
\tilde{F}\left(\sigma_{A}\right)+\\
\left\langle \left.\nabla\tilde{F}_{A}\left(\sigma_{A}\right)\right|\rho_{A}\otimes\rho_{B}-\sigma_{A}\otimes\sigma_{B}\right\rangle 
\end{array}\right)\\
\,\,\,\,+\tilde{F}_{B}\left(\rho_{B}\right)-\left(\begin{array}{c}
\tilde{F}_{B}\left(\sigma_{B}\right)+\\
\left\langle \left.\nabla\tilde{F}_{B}\left(\sigma_{B}\right)\right|\rho_{A}\otimes\rho_{B}-\sigma_{A}\otimes\sigma_{B}\right\rangle 
\end{array}\right)\\
=D_{F}\left(\rho_{A},\sigma_{A}\right)+D_{F}\left(\rho_{B},\sigma_{B}\right)\,.
\end{multline*}
\end{svmultproof}

\begin{example}
If tensor products of $2\times2$ Hermitean matrices are embedded
in Hermitean $4\times4$ matrices as in Example \ref{exa:reeltprodukt}
then mutual information is additive.
\end{example}
\begin{lemma}
An additive monotone Bregman divergence satisfies the following identity
\begin{dmath}
D_{F}\left(\sigma_{AB},\rho_{A}\otimes\rho_{B}\right)=  D_{F}\left(\sigma_{AB},\sigma_{A}\otimes\rho_{B}\right) 
  +D_{F}\left(\sigma_{A},\rho_{A}\right).\label{eq:marginal}
\end{dmath}
\end{lemma}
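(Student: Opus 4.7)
The plan is to derive the identity from the three-point identity for Bregman divergences combined with the additivity hypothesis. A direct algebraic consequence of $D_F(u,v) = F(u) - F(v) - \langle \nabla F(v), u-v \rangle$ is that, for any $x$, $y$, $z$,
\begin{equation*}
D_F(x,y) = D_F(x,z) + D_F(z,y) + \langle \nabla F(z) - \nabla F(y),\, x - z \rangle.
\end{equation*}
I would specialize this to $x = \sigma_{AB}$, $z = \sigma_A \otimes \rho_B$, and $y = \rho_A \otimes \rho_B$. The assumed additivity of $D_F$ immediately gives $D_F(\sigma_A \otimes \rho_B, \rho_A \otimes \rho_B) = D_F(\sigma_A,\rho_A) + D_F(\rho_B,\rho_B) = D_F(\sigma_A,\rho_A)$, so the target identity reduces to showing that the cross-term
\begin{equation*}
\langle \nabla F(\sigma_A \otimes \rho_B) - \nabla F(\rho_A \otimes \rho_B),\, \sigma_{AB} - \sigma_A \otimes \rho_B \rangle
\end{equation*}
vanishes.

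To dispatch the cross-term I would invoke the splitting of $\nabla F$ at product states that is already implicit in the preceding additivity theorem: additivity of $F$ on the product manifold forces $\nabla F(\pi_A \otimes \tau_B)$, under the natural embeddings $V_A^*, V_B^* \hookrightarrow (V_A \otimes V_B)^*$, to have the form $\nabla F_A(\pi_A) + \nabla F_B(\tau_B)$. Consequently the difference $\nabla F(\sigma_A \otimes \rho_B) - \nabla F(\rho_A \otimes \rho_B)$ carries only an $A$-component, namely $\nabla F_A(\sigma_A) - \nabla F_A(\rho_A)$. On the other hand, $\sigma_{AB}$ and $\sigma_A \otimes \rho_B$ share the same $A$-marginal (both equal $\sigma_A$, since $\rho_B$ is a state), so the partial trace of $\sigma_{AB} - \sigma_A \otimes \rho_B$ over $B$ is zero, and the pairing vanishes.

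The main obstacle is justifying the gradient splitting cleanly in the general formalism. The required identification already appeared inside the additivity theorem in the substitution $\nabla \tilde F(\sigma_A \otimes \sigma_B) = \nabla \tilde F_A(\sigma_A) + \nabla \tilde F_B(\sigma_B)$, and I would simply reuse that step. If one prefers not to invoke it, a purely algebraic alternative is to expand each of the three Bregman divergences in the claim directly from the definition, substitute $F(\pi_A \otimes \tau_B) = F_A(\pi_A) + F_B(\tau_B)$, and match terms: the $F_A$ and $F_B$ values cancel, while the remaining gradient pairings agree precisely because $\sigma_{AB}$ and $\sigma_A \otimes \rho_B$ have identical $A$-marginals.
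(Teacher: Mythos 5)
Your reduction via the three-point identity is sound: with $x=\sigma_{AB}$, $z=\sigma_{A}\otimes\rho_{B}$, $y=\rho_{A}\otimes\rho_{B}$, additivity handles $D_{F}\left(\sigma_{A}\otimes\rho_{B},\rho_{A}\otimes\rho_{B}\right)=D_{F}\left(\sigma_{A},\rho_{A}\right)$, and everything hinges on the cross-term. The gap is in how you dispose of it. The gradient splitting $\nabla F\left(\pi_{A}\otimes\tau_{B}\right)=\nabla F_{A}\left(\pi_{A}\right)+\nabla F_{B}\left(\tau_{B}\right)$, as an identity in the full dual of $V_{A}\otimes V_{B}$, is not something you can ``simply reuse'': the theorem you want to borrow it from is proved under stronger hypotheses (Jordan-algebra state spaces and sufficiency, where Theorem \ref{thm:local} fixes $F$ up to affine terms), whereas the present lemma assumes only that $D_{F}$ is additive and monotone. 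More substantively, additivity of $D_{F}$ only constrains $F$ on product states and the pairing of $\nabla F$ at product states against differences of product states; it says nothing by itself about $\nabla F$ in directions transverse to the manifold of product states, and $\sigma_{AB}-\sigma_{A}\otimes\rho_{B}$ is precisely such a direction. Your fallback argument (``the remaining gradient pairings agree because the $A$-marginals agree'') presupposes exactly the splitting that is in question, so it does not close the gap either.

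The missing ingredient is the decomposition of the joint state into product states, $\sigma_{AB}=\sum_{i}t_{i}\cdot\pi_{A,i}\otimes\pi_{B,i}$ with $\sum_{i}t_{i}=1$, which is what local tomography ($\Omega\subset V_{A}\otimes V_{B}$) provides. With it you can finish in either of two ways. The paper avoids gradients altogether: apply the affine Bregman identity (\ref{eq:BregmanIdentity}) to this decomposition with reference states $\rho_{A}\otimes\rho_{B}$ and $\sigma_{AB}$, use additivity termwise on the product states, apply (\ref{eq:BregmanIdentity}) once more on the $A$-factors (where $\sum_{i}t_{i}\cdot\pi_{A,i}=\sigma_{A}$), and reassemble by additivity. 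Alternatively, your route can be repaired: evaluate your three-point identity at $x=\pi_{A}\otimes\pi_{B}$ and use additivity to see that $\left\langle \nabla F\left(\sigma_{A}\otimes\rho_{B}\right)-\nabla F\left(\rho_{A}\otimes\rho_{B}\right),\pi_{A}\otimes\pi_{B}\right\rangle =\left\langle \cdot,\sigma_{A}\otimes\rho_{B}\right\rangle +D_{F}\left(\pi_{A},\rho_{A}\right)-D_{F}\left(\pi_{A},\sigma_{A}\right)-D_{F}\left(\sigma_{A},\rho_{A}\right)$ is independent of $\pi_{B}$; since product states span $V_{A}\otimes V_{B}$, the gradient difference must be of the form $a\otimes U_{B}$ with $a\in V_{A}^{*}$, and only then does equality of the $A$-marginals of $\sigma_{AB}$ and $\sigma_{A}\otimes\rho_{B}$ kill the cross-term. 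Either way the product-state decomposition plus the (affine) Bregman identity is doing the real work, so once the gap is filled your argument becomes essentially a gradient-level rephrasing of the paper's proof rather than a shortcut around it.
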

\begin{svmultproof}
Any state $\sigma_{AB}$ can be written as an affine combination of
tensor products $\sigma_{AB}=\sum t_{i}\cdot\pi_{A,i}\otimes\pi_{B,i}.$
Then
\begin{dmath*}
D_{F}\left(\sigma_{AB},\rho_{A}\otimes\rho_{B}\right)=\sum t_{i}\cdot D_{F}\left(\pi_{A,i}\otimes\pi_{B,i},\rho_{A}\otimes\rho_{B}\right)\\
-\sum t_{i}\cdot D_{F}\left(\pi_{A,i}\otimes\pi_{B,i},\sigma_{AB}\right)\,.
\end{dmath*}
Using additivity it can be rewritten as 
\begin{multline*}
D_{F}\left(\sigma_{AB},\rho_{A}\otimes\rho_{B}\right)\\
=\sum t_{i}\cdot\left(D_{F}\left(\pi_{A,i},\rho_{A}\right)+D_{F}\left(\pi_{B,i},\rho_{B}\right)\right)\\
\,\,\,\,-\sum t_{i}\cdot D_{F}\left(\pi_{A,i}\otimes\pi_{B,i},\sigma_{AB}\right)\\
=\sum t_{i}\cdot D_{F}\left(\pi_{A,i},\rho_{A}\right)+\sum t_{i}\cdot D_{F}\left(\pi_{B,i},\rho_{B}\right)\\
-\sum t_{i}\cdot D_{F}\left(\pi_{A,i}\otimes\pi_{B,i},\sigma_{AB}\right)\,.
\end{multline*}
The Bregman identity (\ref{eq:BregmanIdentity}) gives
\begin{dmath*}
D_{F}\left(\sigma_{AB},\rho_{A}\otimes\rho_{B}\right)=\\
\sum t_{i}\cdot D_{F}\left(\pi_{A,i},\sigma_{A}\right)+D_{F}\left(\sigma_{A},\rho_{A}\right)\\
+\sum t_{i}\cdot D_{F}\left(\pi_{B,i},\rho_{B}\right)
-\sum t_{i}\hiderel{\cdot} D_{F}\left(\pi_{A,i}\otimes\pi_{B,i},\sigma_{AB}\right)\,.
\end{dmath*}
This can be re-arranged as
\begin{dmath*}
D_{F}\left(\sigma_{AB},\rho_{A}\otimes\rho_{B}\right)=
\sum t_{i}\hiderel{\cdot}\left(D_{F}\left(\pi_{A,i},\sigma_{A}\right)\hiderel{+}D_{F}\left(\pi_{B,i},\rho_{B}\right)\right)\\
-\sum t_{i}\cdot D_{F}\left(\pi_{A,i}\otimes\pi_{B,i},\sigma_{AB}\right)+D_{F}\left(\sigma_{A},\rho_{A}\right).
\end{dmath*}
Now additivity leads to 
\begin{multline*}
D_{F}\left(\sigma_{AB},\rho_{A}\otimes\rho_{B}\right)=\\
\sum t_{i}\cdot D_{F}\left(\pi_{A,i}\otimes\pi_{B,i},\sigma_{A}\otimes\rho_{B}\right)\\
-\sum t_{i}\cdot D_{F}\left(\pi_{A,i}\otimes\pi_{B,i},\sigma_{AB}\right)+D_{F}\left(\sigma_{A},\rho_{A}\right)\\
=D_{F}\left(\sigma_{AB},\sigma_{A}\otimes\rho_{B}\right)+D_{F}\left(\sigma_{A},\rho_{A}\right).
\end{multline*}
\end{svmultproof}

\begin{definition}
We define the \emph{conditional mutual information} on a tripartite
system as 
\end{definition}
\begin{multline*}
I_{\sigma}\left(A;B\mid C\right)=D_{F}\left(\sigma_{ABC},\sigma_{A}\otimes\sigma_{B}\otimes\sigma_{C}\right)\\
-D_{F}\left(\sigma_{AC},\sigma_{A}\otimes\sigma_{C}\right)-D_{F}\left(\sigma_{BC},\sigma_{B}\otimes\sigma_{C}\right).
\end{multline*}

In our definition of conditional mutual information the subsystems
$A,\,B,$ and $C$ should be distinct so that the tensor products
are defined. If the state space is a simplex, i.e. the system is classical,
then one may allow the subsystems to overlap. 
\begin{definition}
A function $I_{\sigma}$ on a multipartite system is called a \emph{separoid
function}\cite{Dawid2001,Harremoes2018a} if it satisfies the following three properties:
\begin{equation}
\begin{array}{ll}
\textrm{\textbf{Positivity}} & I_{\sigma}\left(A;B\mid C\right)\geq0\,.\\
\textrm{\textbf{Symmetry}} & I_{\sigma}\left(A;B\mid C\right)=I_{\sigma}\left(B;A\mid C\right)\,.\\
\textrm{\textbf{Chain rule}} & I_{\sigma}\left(A;BC\mid D\right)=I_{\sigma}\left(A;B\mid D\right)\\
 & \,\,\,\,\,\,\,\,\,\,\,\,\,\,\,\,\,\,\,\,\,\,\,\,\,\,\,\,\,\,\,\,\,\,\,\,\,\,\,\,\,\,\,\,\,\,+I_{\sigma}\left(A;C\mid BD\right)\,.
\end{array}\label{eq:Chain}
\end{equation}
\end{definition}
\begin{theorem}
Assume that $D_{F}$ is a monotone and additive Bregman divergence.
Then conditional mutual information is a separoid function.
\end{theorem}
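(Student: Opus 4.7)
The plan is to first derive the cleaner identity $I_\sigma(A;B\mid C)=I_\sigma(A;BC)-I_\sigma(A;C)$, from which all three separoid properties fall out essentially for free. To get it, apply the marginal lemma (Equation (\ref{eq:marginal})) with the bipartite split $(BC)\mid A$ and target $(\sigma_B\otimes\sigma_C)\otimes\sigma_A$; this gives
\begin{equation*}
D_F(\sigma_{ABC},\sigma_A\otimes\sigma_B\otimes\sigma_C)=D_F(\sigma_{ABC},\sigma_A\otimes\sigma_{BC})+D_F(\sigma_{BC},\sigma_B\otimes\sigma_C).
\end{equation*}
Substituting into the definition of $I_\sigma(A;B\mid C)$ the term $D_F(\sigma_{BC},\sigma_B\otimes\sigma_C)$ cancels, leaving
\begin{equation*}
I_\sigma(A;B\mid C)=D_F(\sigma_{ABC},\sigma_A\otimes\sigma_{BC})-D_F(\sigma_{AC},\sigma_A\otimes\sigma_C)=I_\sigma(A;BC)-I_\sigma(A;C).
\end{equation*}
Swapping the roles of $A$ and $B$ in the same argument also gives the companion form $I_\sigma(A;B\mid C)=I_\sigma(B;AC)-I_\sigma(B;C)$.

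With this identity in hand, the three properties drop out quickly. \textbf{Symmetry} is manifest from the original three-term definition: the first term is symmetric in $A,B$ and the sum of the remaining two terms is symmetric in $A,B$. \textbf{Positivity} follows from monotonicity of $D_F$ applied to the partial-trace affinity $\mathbb{E}_{AC}\colon V_{ABC}\to V_{AC}$, which sends $\sigma_{ABC}\mapsto\sigma_{AC}$ and $\sigma_A\otimes\sigma_{BC}\mapsto\sigma_A\otimes\sigma_C$; monotonicity then yields
\begin{equation*}
D_F(\sigma_{ABC},\sigma_A\otimes\sigma_{BC})\ \geq\ D_F(\sigma_{AC},\sigma_A\otimes\sigma_C),
\end{equation*}
which is the simplified formula for $I_\sigma(A;B\mid C)\geq 0$.

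For the \textbf{chain rule}, treat the composites $BD$ and $BCD$ as single subsystems (permissible since the marginal lemma only uses the bipartite tensor structure, which composites inherit) and apply the simplified identity three times:
\begin{align*}
I_\sigma(A;BC\mid D)&=I_\sigma(A;BCD)-I_\sigma(A;D),\\
I_\sigma(A;B\mid D)&=I_\sigma(A;BD)-I_\sigma(A;D),\\
I_\sigma(A;C\mid BD)&=I_\sigma(A;BCD)-I_\sigma(A;BD).
\end{align*}
Adding the last two equations and comparing with the first yields exactly $I_\sigma(A;BC\mid D)=I_\sigma(A;B\mid D)+I_\sigma(A;C\mid BD)$.

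I expect no serious obstacle: the essential content has already been packaged into the preceding marginal lemma, which in turn rested on additivity together with the Bregman identity. The only delicate point is verifying that the lemma still applies when one or both factors are themselves composites of several elementary subsystems, but its proof nowhere used that the factors were atomic---only the bipartite tensor decomposition, which is available for any grouping consistent with the multipartite structure.
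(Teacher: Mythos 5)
Your proof is correct, and it uses the same two ingredients as the paper (the marginal identity (\ref{eq:marginal}) derived from additivity plus the Bregman identity, and monotonicity for positivity), but it organizes them differently. The paper proves positivity by rewriting $I_{\sigma}\left(A;B\mid C\right)=D_{F}\left(\sigma_{ABC},\sigma_{B}\otimes\sigma_{AC}\right)-D_{F}\left(\sigma_{A}\otimes\sigma_{BC},\sigma_{A}\otimes\sigma_{B}\otimes\sigma_{C}\right)$ and applying monotonicity to $\Phi\left(\rho\right)=\sigma_{A}\otimes\mathbb{E}_{BC}\left(\rho\right)$, and it proves the chain rule by expanding both sides of (\ref{eq:Chain}) separately with (\ref{eq:marginal}) and checking that they reduce to the same three-term expression. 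You instead extract once and for all the telescoping identity $I_{\sigma}\left(A;B\mid C\right)=I_{\sigma}\left(A;BC\right)-I_{\sigma}\left(A;C\right)$ (the mirror image of the paper's positivity decomposition, with the roles of $A$ and $B$ swapped), after which positivity is a single monotonicity statement and the chain rule is a two-line telescoping sum; this is cleaner and yields an identity of independent interest, while the paper's direct expansion stays closer to the raw definition and never needs to treat mutual information with composite arguments as a separate object. Two small remarks. First, your appeal to ``the partial-trace affinity $\mathbb{E}_{AC}\colon V_{ABC}\to V_{AC}$'' does not literally fit the paper's definition of monotonicity, which quantifies over affinities $\Omega\to\Omega$ of one fixed state space; the repair is exactly the paper's trick, namely to use $\rho\mapsto\mathbb{E}_{AC}\left(\rho\right)\otimes\tau_{B}$ for a fixed $\tau_{B}\in\Omega_{B}$ and then strip the factor $\tau_{B}$ by additivity (or by the paper's convention for $D_{F}$ on marginals), so this is cosmetic rather than a gap. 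Second, your closing observation that the marginal lemma applies to composite groupings is right and is needed equally by the paper, whose own chain-rule proof applies (\ref{eq:marginal}) across splits such as $BC$ versus $AD$; the only hypothesis consumed is additivity across the chosen bipartite grouping together with local tomography, which the paper assumes.
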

\begin{svmultproof}
\textbf{Positivity} Conditional mutual information can be rewritten
as 
\begin{dmath*}
I_{\sigma}\left(A;B\mid C\right)=D_{F}\left(\sigma_{ABC},\sigma_{A}\otimes\sigma_{B}\otimes\sigma_{C}\right)\\
-D_{F}\left(\sigma_{AC},\sigma_{A}\otimes\sigma_{C}\right)-D_{F}\left(\sigma_{BC},\sigma_{B}\otimes\sigma_{C}\right)\\
=D_{F}\left(\sigma_{ABC},\sigma_{B}\otimes\sigma_{AC}\right)-D_{F}\left(\sigma_{BC},\sigma_{B}\otimes\sigma_{C}\right)\\
=D_{F}\left(\sigma_{ABC},\sigma_{B}\otimes\sigma_{AC}\right)\\
-D_{F}\left(\sigma_{A}\otimes\sigma_{BC},\sigma_{A}\otimes\sigma_{B}\otimes\sigma_{C}\right).
\end{dmath*}
Let $\Phi$ denote the affinity $\Phi\left(\rho\right)=\sigma_{A}\otimes E_{BC}\left(\rho\right)$.
Then
\begin{align*}
\Phi\left(\sigma_{ABC}\right) & =\sigma_{A}\otimes\sigma_{BC}\,,\\
\Phi\left(\sigma_{B}\otimes\sigma_{AC}\right) & =\sigma_{A}\otimes\sigma_{B}\otimes\sigma_{C}\,.
\end{align*}
Therefore monotonicity implies that $I_{\sigma}\left(A;B\mid C\right)$
cannot be negative.

\textbf{Symmetry} It follows directly from the definition that conditional
mutual information is symmetric.

\textbf{Chain rule} To prove the chain rule we expand the left hand
side of Equation (\ref{eq:Chain}) as 
\begin{multline*}
I_{\sigma}\left(A;BC\mid D\right)=D_{F}\left(\sigma_{ABCD},\sigma_{A}\otimes\sigma_{BC}\otimes\sigma_{D}\right)\\
-D_{F}\left(\sigma_{AD},\sigma_{A}\otimes\sigma_{D}\right)-D_{F}\left(\sigma_{BCD},\sigma_{BC}\otimes\sigma_{D}\right).
\end{multline*}
Next we use Equation (\ref{eq:marginal}) to get
\begin{align*}
I_{\sigma}\left(A;BC\mid D\right)&\\
=&\left(\begin{array}{c}
D_{F}\left(\sigma_{ABCD},\sigma_{A}\otimes\sigma_{B}\otimes\sigma_{C}\otimes\sigma_{D}\right)\\
-D_{F}\left(\sigma_{BC},\sigma_{B}\otimes\sigma_{C}\right)
\end{array}\right)\\
&-D_{F}\left(\sigma_{AD},\sigma_{A}\otimes\sigma_{D}\right)\\
&-\left(\begin{array}{c}
D_{F}\left(\sigma_{BCD},\sigma_{B}\otimes\sigma_{C}\otimes\sigma_{D}\right)\\
-D_{F}\left(\sigma_{BC},\sigma_{B}\otimes\sigma_{C}\right)
\end{array}\right).
\end{align*}
The left hand side reduces to
\begin{multline}
I_{\sigma}\left(A;BC\mid D\right)=D_{F}\left(\sigma_{ABCD},\sigma_{A}\otimes\sigma_{B}\otimes\sigma_{C}\otimes\sigma_{D}\right)\\
-D_{F}\left(\sigma_{AD},\sigma_{A}\otimes\sigma_{D}\right)-D_{F}\left(\sigma_{BCD},\sigma_{B}\otimes\sigma_{C}\otimes\sigma_{D}\right).\label{eq:leftKaede}
\end{multline}
Similarly, we expand the right hand side of Equation (\ref{eq:Chain})
as
\begin{multline*}
I_{\sigma}\left(A;B\mid D\right)+I_{\sigma}\left(A;C\mid BD\right)\\
=D_{F}\left(\sigma_{ABD},\sigma_{A}\otimes\sigma_{B}\otimes\sigma_{D}\right)-D_{F}\left(\sigma_{AD},\sigma_{A}\otimes\sigma_{D}\right)\\
-D_{F}\left(\sigma_{BD},\sigma_{B}\otimes\sigma_{D}\right)+D_{F}\left(\sigma_{ABCD},\sigma_{A}\otimes\sigma_{C}\otimes\sigma_{BD}\right)\\
-D_{F}\left(\sigma_{ABD},\sigma_{A}\otimes\sigma_{BD}\right)-D_{F}\left(\sigma_{BCD},\sigma_{C}\otimes\sigma_{BD}\right)\,.
\end{multline*}
We use Equation (\ref{eq:marginal}) to re-write the three last terms
as
\begin{multline*}
I_{\sigma}\left(A;B\mid D\right)+I_{\sigma}\left(A;C\mid BD\right)\\
=D_{F}\left(\sigma_{ABD},\sigma_{A}\otimes\sigma_{B}\otimes\sigma_{D}\right)-D_{F}\left(\sigma_{AD},\sigma_{A}\otimes\sigma_{D}\right)\\
\,\,\,\,-D_{F}\left(\sigma_{BD},\sigma_{B}\otimes\sigma_{D}\right)\\
\,\,\,\,+\left(\begin{array}{c}
D_{F}\left(\sigma_{ABCD},\sigma_{A}\otimes\sigma_{B}\otimes\sigma_{C}\otimes\sigma_{D}\right)\\
-D_{F}\left(\sigma_{BD},\sigma_{B}\otimes\sigma_{D}\right)
\end{array}\right)\\
\,\,\,\,-\left(\begin{array}{c}
D_{F}\left(\sigma_{ABD},\sigma_{A}\otimes\sigma_{B}\otimes\sigma_{D}\right)\\
-D_{F}\left(\sigma_{BD},\sigma_{B}\otimes\sigma_{D}\right)
\end{array}\right)\\
-\left(\begin{array}{c}
D_{F}\left(\sigma_{BCD},\sigma_{B}\otimes\sigma_{C}\otimes\sigma_{D}\right)\\
-D_{F}\left(\sigma_{BD},\sigma_{B}\otimes\sigma_{D}\right)
\end{array}\right)\,.
\end{multline*}
The right hand side reduces to
\begin{multline}
I_{\sigma}\left(A;B\mid D\right)+I_{\sigma}\left(A;C\mid BD\right)\\
=D_{F}\left(\sigma_{ABCD},\sigma_{A}\otimes\sigma_{B}\otimes\sigma_{C}\otimes\sigma_{D}\right)\\
\,\,\,\,-D_{F}\left(\sigma_{AD},\sigma_{A}\otimes\sigma_{D}\right)-D_{F}\left(\sigma_{BCD},\sigma_{B}\otimes\sigma_{C}\otimes\sigma_{D}\right).\label{eq:rightKaede}
\end{multline}
Since the left hand side (\ref{eq:leftKaede}) and the right hand
side (\ref{eq:rightKaede}) are equal we have proved the chain rule
(\ref{eq:Chain}).
\end{svmultproof}

\section{Conslusion\label{sec:conclusion}}

We have carefully described concepts like state space and introduced state spaces 
on Jordan algebras as the most important example. In general probabilistic theories there are different ways of defining the entropy of a state, 
but these different definitions coincide on Jordan algebras.
For any optimization problem trere is an associated Bregman divergence, but with extre constraints like monotonicity, sufficiency, or statistical locality 
a Bregman divergenceon a Jordan algebra is proportional to the Bregman divergence generated by the uniquely defined entropy function. 
A monotone Bregman divergence on a Jordan algebra is automatically additive. For composed systems an additive and monotone  Bregman divergence can be used to define conditional mutual information and this quantity will satisfy consistency,
the data processing inequality and the chain rule. In \cite{Pawlowski2009} it was proved that if conditional mutual information can be defined in a way such that 
consistency,
the data processing inequality and the chain rule are satisfied then the system will satisfy the condition called 
\emph{information causality} \cite{Pawlowski2009}. In \cite{Pawlowski2009} it was also proved that a system that satisfies
information causality cannot have super-quantum correlations, i.e.
correlations violate Tsirelson's bound. The conclusion is that the existence of a monotone Bregman divergence implies that super-quantum correlations do not exist.

The results work out nicely on Jordan algebras, but maybe it will work in any generalized probabilistic theory. For instance it would be interesting if the following conjecture holds.

\begin{conjecture}
All monotone Bregman divergences are additive. 
\end{conjecture}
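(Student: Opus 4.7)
The plan is to mimic the additivity proof from the Jordan-algebra case but without the crutch of Jordan structure, reducing the statement to the vanishing of a single cross-term and then attacking that cross-term with the structural freedom in sufficiency. Since monotonicity implies sufficiency, and sufficiency applied to the affinity $\pi \mapsto \mathbb{E}_A(\pi) \otimes \tau_B$ already guarantees, through the preceding theorem, that $D_F(\rho_A \otimes \tau_B, \sigma_A \otimes \tau_B)$ is independent of $\tau_B$, the marginal divergences $D_F(\rho_A, \sigma_A)$ and $D_F(\rho_B, \sigma_B)$ are unambiguously defined. Applying the three-point Bregman identity at the intermediate point $\sigma_A \otimes \rho_B$ and using sufficiency to evaluate the two marginal divergences on the right-hand side yields
\[
D_F(\rho_A \otimes \rho_B, \sigma_A \otimes \sigma_B) = D_F(\rho_A, \sigma_A) + D_F(\rho_B, \sigma_B) + C,
\]
where $C = \langle \nabla F(\sigma_A \otimes \rho_B) - \nabla F(\sigma_A \otimes \sigma_B),\, (\rho_A - \sigma_A) \otimes \rho_B \rangle$; the conjecture is thus equivalent to $C = 0$ for every quadruple of states.

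To attack $C$ I would first exploit the sufficiency identity $D_F(\sigma_A \otimes \pi_B, \sigma_A \otimes \sigma_B) = D_F(\pi_B, \sigma_B)$ along every slice $\sigma_A \otimes \cdot$: it forces $\pi_B \mapsto F(\sigma_A \otimes \pi_B)$ to differ from the induced marginal function $F_B$ by a function affine in $\pi_B$ depending on $\sigma_A$. Combined with the symmetric statement, $F$ restricted to the product submanifold takes the form $\tilde F_A(\rho_A) + \tilde F_B(\rho_B) + b(\rho_A, \rho_B)$ with $b$ bi-affine. Substituting this into $C$ cancels the pure $A$ and pure $B$ gradient contributions and reduces the residual to a pairing between the mixed-bilinear coefficient of $b$ and the off-product components of $\nabla F(\sigma_A \otimes \sigma_B)$. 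The next step would be to pin down these off-product components by chaining further sufficiency equalities under a richer family of affinities, for instance tests on the $A$-side composed with conditional preparations on the $B$-side; each such equality would impose a linear constraint on the gradient, and the hope is that a full orbit of such affinities forces $\nabla F(\sigma_A \otimes \sigma_B)$ to factor through the marginal gradients, collapsing $C$ to zero.

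The main obstacle is precisely this last step. Sufficiency applied to product slices only constrains $F$ on the product submanifold, whereas $\nabla F(\sigma_A \otimes \sigma_B)$ pairs with directions that leave that submanifold, and in a generalized probabilistic theory without local tomography of channels or a sufficiently rich supply of conditional preparations there is no a priori reason for the off-product components of the gradient to be determined by the marginal divergences alone. I expect that either an additional structural hypothesis (some form of spectrality, self-duality, or a steering property strong enough to generate the missing sufficiency equalities) will be required, or else the conjecture fails in full generality; in the latter case the natural tactic is a proof by contradiction, assuming $C \neq 0$ for a specific quadruple and engineering an affinity whose monotonicity inequality is then violated at second order in a small perturbation.
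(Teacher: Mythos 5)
The statement you are asked about is not a theorem of the paper: it is stated as an open \emph{conjecture}, and the paper offers no proof of it. The only additivity result the paper actually proves assumes in addition that $\Omega_A$ and $\Omega_B$ are state spaces of Jordan algebras, and its proof leans on Theorem \ref{thm:local} applied to the associative subalgebras generated by the states, i.e.\ on spectral decompositions that a general convex state space does not possess. So there is no ``paper proof'' for your attempt to match, and your attempt, by your own admission, is not a proof either.

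That said, your partial analysis is sound and worth keeping. The three-point Bregman identity with intermediate point $\sigma_A\otimes\rho_B$, together with the sufficiency argument showing that $D_F(\rho_A\otimes\tau_B,\sigma_A\otimes\tau_B)$ does not depend on $\tau_B$, correctly reduces additivity to the vanishing of the cross-term $C=\left\langle \nabla F\left(\sigma_A\otimes\rho_B\right)-\nabla F\left(\sigma_A\otimes\sigma_B\right),\left(\rho_A-\sigma_A\right)\otimes\rho_B\right\rangle$, and your slice argument that $F$ restricted to product states is $\tilde F_A+\tilde F_B+{}$bi-affine is also correct. The genuine gap is exactly where you locate it: $C$ contains the pairing of $\nabla F\left(\sigma_A\otimes\sigma_B\right)$ with the direction $\left(\rho_A-\sigma_A\right)\otimes\left(\rho_B-\sigma_B\right)$, which is transverse to the product submanifold, and sufficiency applied to product-type affinities only constrains $F$ \emph{on} that submanifold, so no combination of the equalities you list can determine those gradient components in a general theory (without local tomography, spectrality, or a rich supply of conditional preparations there is simply no supply of affinities whose sufficiency or monotonicity sees them). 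This is precisely the point where the paper's Jordan-algebra proof invokes the global identification of $F$ with entropy plus an affine function, whose gradient factorizes across the tensor product; absent such structure the step is missing, and the statement remains open, as the paper itself acknowledges. Your closing suggestion --- that either an extra structural hypothesis is needed or the conjecture fails, possibly detectable by a second-order violation of monotonicity --- is a reasonable research direction, but it is a plan, not a proof.
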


A careful inspection of the proofs also reveal that the results involving Jordan algebras only involve that the cone is self dual and that a Euclidean Jordan algebra is {\em strongly spectral} in the sense that $f(\sigma)$ is well defined for any function $f$. Appearently monotonicity of a Bregman divergence implies spectrality, but the only solid result in this direction is the following theorem.

\begin{theorem}[\cite{Harremoes2017c}]
If a state space has rank 2 and it has a strict and monotone Bregman
divergence then the state space can be represented as a spin factor. 
In particular the state spce is strongly spectral.
\end{theorem}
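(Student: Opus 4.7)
The plan is to exploit the rigidity of strict monotonicity under all affinities $\Phi:\Omega\to\Omega$ in order to force $\Omega$ to be affinely equivalent to a Euclidean ball, which is the defining property of a spin factor state space. First I would observe that if $\Phi$ is an affine bijection of $\Omega$, then both $\Phi$ and $\Phi^{-1}$ are monotonicity-admissible, so $D_F$ is in fact invariant under every automorphism of $\Omega$. Strictness then promotes this to a strong rigidity: since $F$ is strictly convex it has a unique minimum $c\in\Omega$, and the level sets of $D_F(\cdot,c)$ are precisely the orbits of the automorphism group. In particular, if the automorphism group acts transitively on pure states, then the boundary $\partial\Omega$ is a single orbit.

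Next I would exploit rank 2 to foliate $\Omega$. Every non-pure state lies on a chord joining two mutually singular pure states, and every such chord passes through $c$; the restriction of $D_F(\cdot,c)$ to such a chord is a strictly convex function of the mixing parameter, so the divergence value measures radial position along the chord. It therefore suffices to show that all chords are automorphically equivalent, i.e., that the automorphism group acts transitively on the set of pure states.

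The heart of the argument, and the main obstacle, is establishing this transitivity. I would try to produce, for any two mutually singular pairs $(\sigma_{1},\sigma_{2})$ and $(\sigma_{1}',\sigma_{2}')$ of pure states, an affinity $\Omega\to\Omega$ carrying one pair to the other; monotonicity together with strictness (applied to the inverse direction) forces this map to be an automorphism, because strict convexity of $D_F$ along the chord rules out any non-trivial contraction. Unlike the rank $\geq 3$ case treated in Theorem \ref{thm:local}, rank 2 affords no associative sub-algebra scaffolding, so every automorphism has to be constructed directly from the convex geometry of $\Omega$ and the hypothesis of monotonicity; ruling out degenerate configurations with alternative convex shapes (squares, non-symmetric ovals, etc.) is the essential content of the cited result. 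Once transitivity is in hand, a centred orbit of an affine group on $\partial\Omega$ is forced to be an ellipsoid, so $\Omega$ is affinely a Euclidean ball and hence the state space of $Jspin(d)$ with $d=\dim\Omega-1$. Strong spectrality then follows automatically from the functional calculus available on a spin factor.
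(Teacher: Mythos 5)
There is a genuine gap, and it sits exactly where you place the ``heart of the argument''. Your strategy only extracts information from $D_F$ via invariance under affine automorphisms of $\Omega$, but the hypotheses of the theorem supply no automorphisms at all: monotonicity is a constraint on $D_F$ \emph{given} an affinity, it does not manufacture affinities. A generic rank-2 convex body (say, a smooth strictly convex planar body with no symmetry) has trivial affine automorphism group, and for such a body every step of your plan is vacuous --- the level sets of $D_F(\cdot,c)$ are unions of one-point orbits and the transitivity you need cannot even be formulated. The content of the theorem is precisely that such bodies admit no strict monotone Bregman divergence, and that has to be derived from properties of the divergence itself (its behaviour under non-invertible affinities such as retractions onto chords and two-dimensional sections, the Bregman identity, the resulting functional equations for $F$), which is how the cited result in \cite{Harremoes2017c} proceeds; your sketch of ``producing an affinity carrying one mutually singular pair to another'' never says how such a map is to be constructed from convex geometry alone, and you yourself concede that ruling out squares and asymmetric ovals ``is the essential content of the cited result''. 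That concession means the proposal is a plan for a proof, not a proof.

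Two subsidiary claims are also unjustified and one is circular. Invariance of $D_F$ under automorphisms shows that orbits are contained in level sets of $D_F(\cdot,c)$, not that level sets \emph{are} orbits, and the fixed point $c$ itself needs an argument. More seriously, the foliation step assumes that every non-pure state lies on a chord joining two mutually singular pure states and that all such chords pass through the common point $c$; this is essentially central symmetry plus spectrality of $\Omega$, i.e.\ the conclusion you are trying to prove (it fails, for instance, for a rank-2 ``stadium''-shaped body, which is exactly the kind of shape the theorem must exclude). Note also that the paper you are working from does not prove this statement; it imports it from \cite{Harremoes2017c}, so the comparison here is with that external proof, whose technical core --- extracting spectrality and the ball shape from monotonicity on rank-2 bodies without assuming any symmetry --- is the part your proposal leaves open. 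The final step (a compact affine group acting transitively on the extreme boundary forces an ellipsoid) is salvageable, but it is moot until transitivity is actually established.
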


For most convex bodies it is not possible to define a monotone Bregman
divergence and it is not known if it is possible to define a monotone
Bregman divergences on any convex body that cannot be represented
by a Jordan algebra. It would be highly desirable to classify state
spaces with monotone Bregman divergences in cases when the rank exceeds
2.

\section*{Conflict of interest}
The corresponding author states that there is no conflict of interest.


\begin{thebibliography}{10}
\providecommand{\url}[1]{{#1}}
\providecommand{\urlprefix}{URL }
\expandafter\ifx\csname urlstyle\endcsname\relax
  \providecommand{\doi}[1]{DOI~\discretionary{}{}{}#1}\else
  \providecommand{\doi}{DOI~\discretionary{}{}{}\begingroup
  \urlstyle{rm}\Url}\fi

\bibitem{Adler1995}
Adler, S.: Quaternionic Quantum Mechanics and Quantum Fields.
\newblock Oxford Univ. Press, New York, Oxford (1995)

\bibitem{Baes2007}
Baes, M.: Convexity and differentiability properties of spectral functions and
  spectral mappings on {E}uclidean {J}ordan algebras.
\newblock Linear Algebra and its Applications \textbf{422}, 664--700 (2007).
\newblock \doi{doi:10.1016/j.laa.2006.11.025}

\bibitem{Barnum2016}
Barnum, H., Graydon, M., Wilce, A.: Composites and categories of {E}uclidean
  {J}ordan algebras (2016).
\newblock ArXiv preprint arXiv:1606.09331

\bibitem{Barnum2019}
Barnum, H., Hilgert, J.: Strongly symmetric spectral convex bodies are {J}ordan
  algebra state spaces (2019)

\bibitem{Dawid2001}
Dawid, A.P.: Separoids: A mathematical framework for conditional independence
  and irrelevance.
\newblock Ann. Math. Artif. Intell. \textbf{32}, 335--372 (2001)

\bibitem{Guenaydin1973}
G{\"u}naydin, M., G{\"u}rsey, F.: Quark structure and octonions.
\newblock J. Math. Phys. \textbf{14}(11), 1651--1667 (1973)

\bibitem{Harremoes2017}
Harremo{\"e}s, P.: Divergence and sufficiency for convex optimization.
\newblock Entropy \textbf{19}(5), Article no. 206 (2017).
\newblock \urlprefix\url{https://doi.org/10.3390/e19050206}

\bibitem{Harremoes2017b}
Harremo{\"e}s, P.: Maximum entropy and sufficiency.
\newblock AIP Conference Proceedings \textbf{1853}(1), 040001 (2017).
\newblock \urlprefix\url{https://doi.org/10.1063/1.4985352}

\bibitem{Harremoes2018a}
Harremo{\"e}s, P.: Entropy inequalities for lattices.
\newblock Entropy \textbf{20}, 748 (2018).
\newblock \doi{10.3390/e20100784}

\bibitem{Harremoes2017c}
Harremo{\"e}s, P.: Entropy on spin factors.
\newblock In: N.~Ay, P.~Gibilisco, F.~Mat{\' u}{\v s} (eds.) Information
  Geometry and Its Applications, \emph{Springer Proceedings in Mathematics \&
  Statistics}, vol. 252, pp. 247--278. Springer (2018)

\bibitem{Jordan1934}
Jordan, P., von Neumann, J., Wigner, E.: On an algebraic generalization of the
  quantum mechanical formalism.
\newblock Annals of Mathematics \textbf{35}(1), 29--64 (1934).
\newblock \doi{10.2307/1968117}.
\newblock JSTOR 1968117

\bibitem{Krumm2017}
Krumm, M., Barnum, H., Barrett, J., M{\"u}ller, M.P.: Thermodynamics and the
  structure of quantum theory.
\newblock New Journal of Physics \textbf{19}(4), 043025 (2017).
\newblock \doi{10.1088/1367-2630/aa68ef}.
\newblock \urlprefix\url{http://dx.doi.org/10.1088/1367-2630/aa68ef}

\bibitem{Manogue2010}
Manogue, C.A., Dray, T.: Octonions, e6, and particle physics.
\newblock J. Phys. Conf. p. 012005 (2010)

\bibitem{McCrimmon2004}
{McC}rimmon, K.: A Taste of Jordan Algebras.
\newblock Springer (2004)

\bibitem{Mueller-Hermes2017}
M{\"u}ller-Hermes, A., Reeb, D.: Monotonicity of the quantum relative entropy
  under positive maps.
\newblock Annales Henri Poincar{\'e} \textbf{18}(5), 1777--1788 (2017).
\newblock \urlprefix\url{https://doi.org/10.1007/s00023-017-0550-9}

\bibitem{Pawlowski2009}
Pawlowski, M., Paterek, T., Kaszlikowski, D., Scarani, V., Winter, A.,
  Zukowski, M.: Information causality as a physical principle.
\newblock Nature \textbf{461}, 1101--1104 (2009).
\newblock \doi{https://doi.org/10.1038/nature08400}

\bibitem{Short2010}
Short, A.J., Wehner, S.: Entropy in general physical theories.
\newblock New J. Phys. \textbf{12}, 033023 (2010)

\bibitem{Ullman1989}
Ullman, J.D.: Principles of Database and Knowledge-base Systems, vol.~1.
\newblock Computer Science Press, Stanford (1989)

\end{thebibliography}

\end{document}